\newcommand {\OelOqGyI} {On quantum interactive proofs with short messages}
\newcommand {\SddsvkvC} [1] {\texttt{#1}}
\newcommand {\chmyOupr} {Attila Pereszl\'{e}nyi}
\newcommand {\KfiPvLSr} {\SddsvkvC{attila.pereszlenyi@gmail.com}}
\newcommand {\kFJIkEiy}
{Centre for Quantum Technologies, National University of Singapore}
\newcommand {\UdxPemrT} {Without loss of generality}
\newcommand {\EtVQAUvQ} {without loss of generality}
\newcommand {\bsIbxBHW} {if and only if}
\newcommand {\keQZCenD} {such that}
\newcommand {\UgLlztAi} {I.e.,}
\newcommand {\DwcWexfp} {i.e.,}
\newcommand {\WvEhIjGd} {Hilbert space}
\newcommand {\LsetZeCP} {Choi--Jamio{\l}kowski representation}
\newcommand {\eyBJtrpd} {et al.}
\newcommand {\KBvQOvrw} {Quantum interactive proof system}
\newcommand {\JgSMGwES} {quantum interactive proof system}
\newcommand {\UeDBErYM} {
\author {\chmyOupr\thanks{E-mail: \KfiPvLSr.}\\
\textsl{\small \kFJIkEiy}}
}
\newcommand {\MGEHGnhN} {
\bibliographystyle{halpha}
\bibliography{./bib}
}
\newcommand {\HAPnvheP} [1] {\ensuremath{ \left( #1 \right) }}
\newcommand {\OLNiNHyK} [1] {\ensuremath{ \left[ #1 \right] }}
\newcommand {\ZcdnmWHx} [1] {\ensuremath{ \left\lbrace #1 \right\rbrace }}
\newcommand {\MwLfFFxk} {\: \! \!}
\newcommand {\GPnJfPwC} [2] {\ensuremath{ #1 \MwLfFFxk \HAPnvheP{ #2 } }}
\newcommand {\XYYXlqUI} {\ensuremath{ \stackrel {\mathrm{def}} {=} }}
\newcommand {\UpcxRgfA} [1] {\GPnJfPwC {O} {#1}}
\newcommand {\WctPVadE} [1] {\UpcxRgfA{\log #1}}
\newcommand {\BXycapFx} [1] {\ZcdnmWHx{ 1, 2, \ldots, #1 }}
\newcommand {\orxHiCqI} [1] {\ensuremath{ \left\lceil #1 \right\rceil }}
\newcommand {\ZVVBqftq} [3] {\ensuremath{ #1 : \, #2 \rightarrow #3 }}
\newcommand {\vpmfnZuX} {\ensuremath{ \mathsf{poly} }}
\newcommand {\nkZOryEx} [1] {\GPnJfPwC {\vpmfnZuX} {#1}}
\newcommand {\EtgXbDyQ} {\ensuremath{ \mathsf{log} }}
\newcommand {\OraXXrdX} [1] {\GPnJfPwC {\EtgXbDyQ} {#1}}
\newcommand {\aYnrcuNB} [1] {\ensuremath{ \mathbb{#1} }}
\newcommand {\oKlAveag} {\aYnrcuNB{R}}
\newcommand {\sFPODjqO} {\aYnrcuNB{C}}
\newcommand {\bSkTgkBm} {\aYnrcuNB{N}}
\newcommand {\pRWLPcKq} [1] {\ensuremath{ \left\langle #1 \right| }}
\newcommand {\CligxKFo} [1] {\ensuremath{ \left| #1 \right\rangle }}
\newcommand {\clWmGeKU} [2] {\ensuremath{ \CligxKFo{#1} \! \pRWLPcKq{#2} }}
\newcommand {\sCUPypWN} [1] {\clWmGeKU{#1}{#1}}
\newcommand {\RoHOARwu} {\ensuremath{ \otimes }}
\newcommand {\znbdCPOj} [1] {\ensuremath{ #1^{*} }}
\newcommand {\VGXhOmeS} [1] {\ensuremath{ #1^{-1} }}
\newcommand {\FyjDddRU} {\ensuremath{ \mathrm{Tr} }}
\newcommand {\jYPzSEcs} [2] {\GPnJfPwC{ \FyjDddRU_{#1} }{ #2 }}
\newcommand {\xxsmCMTw} [1] {\ensuremath{ \left\| #1 \right\| }}
\newcommand {\BVOiHqKx} [2] {\ensuremath{ \xxsmCMTw{#1}_{#2} }}
\newcommand {\RtRjhAaz} [1] {\BVOiHqKx{#1}{\FyjDddRU}}
\newcommand {\fsXxynbX} [2] {\ensuremath{ \frac{1}{2} \RtRjhAaz{#1 - #2} }}
\newcommand {\lScNDDFa} [1] {\ensuremath{ \mathsf{#1} }}
\newcommand {\uplxoTuD} {\lScNDDFa{NP}}
\newcommand {\MVZVnART} {\lScNDDFa{PSPACE}}
\newcommand {\JKkcVYtS} {\lScNDDFa{IP}}
\newcommand {\OzghOdjr} {\lScNDDFa{MA}}
\newcommand {\VFQyzTUk} {\lScNDDFa{QMA}}
\newcommand {\WfUTAOJK} {\lScNDDFa{QIP}}
\newcommand {\ObjIVIHZ} {\lScNDDFa{BQP}}
\newcommand {\HiQndmIP} [1] {\ensuremath{ \mathnormal{#1} }}
\newcommand {\FdVGMifg} [1] {\ensuremath{ \mathbf{#1} }}
\newcommand {\WUmziIIO} {\FdVGMifg{CNOT}}
\newcommand {\SRoWkgaL} {\FdVGMifg{H}}
\newcommand {\hWDmTLuj} {\FdVGMifg{T}}
\newcommand {\fAEoNWeb} {\ensuremath{ \mathbbm{1} }}
\newcommand {\nGRKiFQI} [1] {\ensuremath{ \mathsf{#1} }}
\newcommand {\zkKUsJUE} [1] {\ensuremath{ \mathcal{#1} }}
\newcommand {\CrzaBbkD} [1] {\ensuremath{\mathrm{#1}}}
\newcommand {\xMawpTFS} [1] {\GPnJfPwC{\CrzaBbkD{L}}{#1}}
\newcommand {\ppKvmYcj} [1] {\GPnJfPwC{\CrzaBbkD{D}}{#1}}
\newcommand {\zRYfDEHl} [2] {\GPnJfPwC{\CrzaBbkD{C}}{#1, #2}}
\theoremstyle {plain}
\newtheorem {thm} {Theorem} [section]
\newtheorem {cor} [thm] {Corollary}
\newtheorem {lem} [thm] {Lemma}
\theoremstyle {remark}
\newtheorem {rem} [thm] {Remark}
\theoremstyle {definition}
\newtheorem {defi} [thm] {Definition}
\newcommand {\QMGFPyuL} [3] {\GPnJfPwC{ \WfUTAOJK_{\mathrm{short}} }{ #1, #2, #3 }}
\newcommand {\sMiihMKK} [2] {\GPnJfPwC{ \WfUTAOJK }{ \OLNiNHyK{\EtgXbDyQ, \vpmfnZuX}, #1, #2 }}
\title {\textbf{\OelOqGyI}}
\date{September 5, 2011}
\begin{document}
\maketitle
\begin{abstract}
This paper proves one of the open problem
posed by Beigi \eyBJtrpd~in \cite{Beigi2011}.
We consider \JgSMGwES{}s where in the beginning the verifier
and prover send messages to each other with the combined length
of all messages being at most logarithmic (in the input length);
and at the end the prover sends a polynomial-length
message to the verifier.
We show that this class has the same expressive power as \VFQyzTUk.
\end{abstract}
\section{Introduction}
\KBvQOvrw{}s (\WfUTAOJK) were introduced by \cite{Watrous1999,Watrous2003}
as a natural extension of interactive proofs (\JKkcVYtS) to the
quantum computational setting.
They have been extensively studied and now it's known
that the power of \JgSMGwES{}s is the same as the classical one,
\DwcWexfp{} $ \WfUTAOJK = \JKkcVYtS = \MVZVnART $ \cite{Jain2010}.
Furthermore, \JgSMGwES{}s still has the same expressive power
if we restrict the number of messages to three and
have exponentially small one-sided error \cite{Kitaev2000}.
If the interaction is only one message from the prover to the
verifier then the class is called \VFQyzTUk, which is the
quantum analogue of \uplxoTuD{} and \OzghOdjr{}.
\VFQyzTUk{} can also be made to have exponentially small error,
and has natural complete problems \cite{Aharonov2002}.
\par
Several variants of \WfUTAOJK{} and \VFQyzTUk{} have also been studied.
We now focus on the case where some or all of the messages are
small, meaning at most logarithmic in the input length.
These cases are usually not interesting in the classical setting
since a logarithmic-length message can be eliminated by the
verifier by enumerating all possibilities.
This is not true in the quantum case,
indeed the variant of \VFQyzTUk{} where we have two unentangled
logarithmic-length proofs contains \uplxoTuD{} \cite{Blier2009};
hence not believed to be equal to \ObjIVIHZ.
On the other hand, if \VFQyzTUk{} has one logarithmic-length proof
then it has the same expressive power as \ObjIVIHZ{} \cite{Marriott2005}.
\par
Beigi \eyBJtrpd~\cite{Beigi2011} proved that in other variants of
\JgSMGwES{}s the short message can also be eliminated without
changing the power of the proof system.
Among others they showed that in the setting when the verifier
sends a short message to the prover and the prover responds
with an ordinary, polynomial-length message, the short message
can be discarded, and hence the class has the same power as \VFQyzTUk.
They have raised the question if this is also true if we replace
the short question of the verifier with a `short interaction'.
\UgLlztAi{} consider \JgSMGwES{}s where in the beginning the verifier
and prover send messages to each other with the combined length
of all messages being at most logarithmic, and at the end the
prover sends a polynomial-length message to the verifier.
We show that this class has the power of \VFQyzTUk, or in other words,
the short interaction can be discarded.
This is formalized by the following theorem.
\begin{thm}
Let $ c,s : \bSkTgkBm \rightarrow \HAPnvheP{0,1} $ be polynomial-time
computable functions \keQZCenD{}
$ \GPnJfPwC{c}{n} - \GPnJfPwC{s}{n} \in 1/\nkZOryEx{n} $.
Then $ \QMGFPyuL{\OraXXrdX{n}}{c}{s} = \VFQyzTUk $.
\label{thm:QIPshort_equals_QMA}
\end{thm}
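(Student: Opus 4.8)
The plan is to establish the two inclusions separately, the forward one being routine. For $\mathsf{QMA} \subseteq \mathsf{QIP}_{\mathrm{short}}$ with the prescribed parameters, I would observe that a $\mathsf{QMA}$ verifier is precisely a $\mathsf{QIP}_{\mathrm{short}}$ verifier whose initial short interaction is empty, so only the completeness and soundness bounds need attention. Since $\mathsf{QMA}$ can be amplified to exponentially small one-sided error, and since any target pair $(c,s)$ with $c(n)-s(n)\in 1/\mathsf{poly}(n)$ can be obtained from an exponentially small gap by probabilistically mixing the protocol with a trivial accepting or rejecting test, this direction reduces to standard parameter manipulation. The polynomial-time computability of $c$ and $s$ guarantees that the mixing probabilities can be prepared by the verifier.

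The substance is the reverse inclusion $\mathsf{QIP}_{\mathrm{short}}(\log n, c, s) \subseteq \mathsf{QMA}$. First I would put the protocol in a normal form: purify the verifier and the (honest or cheating) prover so that every operation is an isometry, and gather all of the short messages into one message register $M$ of $\ell = O(\log n)$ qubits, so that $\dim M = \mathsf{poly}(n)$. The key structural observation is that, because the two parties begin in a product state and exchange only $\ell$ qubits before the final long message, the joint pure state $|\Phi\rangle_{VP}$ held by the verifier register $V$ and the prover register $P$ just before that message has Schmidt rank at most $2^{\ell} = \mathsf{poly}(n)$ across the $V{:}P$ cut; each transmitted qubit multiplies the rank by at most two, and local operations do not increase it. Hence the prover's register at that moment may be taken to be only $\mathsf{poly}(n)$-dimensional, and the prover's entire short-phase behaviour is a causally ordered strategy on a $\mathsf{poly}(n)$-dimensional space. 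My plan is to use this to eliminate the interaction and reduce to the setting in which the verifier sends a \emph{single} short message and the prover replies with one long message, which is exactly the case shown to equal $\mathsf{QMA}$ by Beigi \emph{et al.}~\cite{Beigi2011}.

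To carry out the elimination I would have the prover commit in advance to its whole short-phase strategy, so that the verifier can run the back-and-forth internally against this commitment rather than round by round. Because the strategy acts on a $\mathsf{poly}(n)$-dimensional space it admits a $\mathsf{poly}(n)$-size description as a quantum strategy (a Choi-type operator) that can be handed over as a witness; the verifier then simulates the exchange by feeding its genuine messages into the committed strategy via gate teleportation, while verifying the efficiently checkable constraints that make the supplied operator a legitimate strategy. The main obstacle, and where the argument must be done carefully, is soundness: I must ensure that a cheating prover gains nothing by supplying an operator not arising from any real interaction with the honest verifier, and that the loss from the teleportation-based simulation degrades the completeness--soundness gap by at most $1/\mathsf{poly}(n)$. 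Here the short total length is essential, since the register teleported in has only $\ell = O(\log n)$ qubits, the overall post-selection succeeds with probability $1/\mathsf{poly}(n)$, and such a rescaling is harmless because $\mathsf{QMA}$ is robust under it. Showing that the de-interactivised protocol reproduces the true value of the original protocol up to this controlled error, so that the gap hypothesis on $c$ and $s$ is preserved, is the crux of the proof.
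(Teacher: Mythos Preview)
Your core observation---that after exchanging $O(\log n)$ qubits the Schmidt rank across the verifier/prover cut is $\mathsf{poly}(n)$, so the prover's private register may be taken to have only $O(\log n)$ qubits---is exactly the paper's Lemma~\ref{lem:prover_space_compression} (proved there by the same inductive Schmidt argument). From that point on, however, the paper takes a simpler route than you propose. Because each short-phase prover unitary $\mathbf{U}_i$ acts on only $O(\log n)$ qubits, Corollary~\ref{cor:unitary_approx_by_circuit} (Solovay--Kitaev plus the two-level decomposition) lets the prover hand over \emph{classical} descriptions of circuits approximating $\mathbf{U}_1,\ldots,\mathbf{U}_m$ to exponential precision, of total length $\mathsf{poly}(n)$. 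The $\mathsf{QMA}$ verifier then simply runs these circuits itself to reproduce the joint state just before the last round, with no teleportation, no comb constraints to check, and no soundness issue at all in this phase: whatever circuits a cheating prover names are \emph{some} valid short-phase strategy. Only the final round---a channel with $O(\log n)$-qubit input and $\mathsf{poly}(n)$-qubit output---is handled via the Choi state, post-selection, tomography, and de~Finetti machinery, and that part is taken verbatim from the proof of Theorem~\ref{thm:QIPlogpoly_equals_QMA} in~\cite{Beigi2011}.

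Your plan to commit to the entire short-phase strategy as a quantum Choi/comb operator and simulate via gate teleportation should also work, but it is heavier: you would need many copies of the committed state and a de~Finetti/tomography argument just to certify the causal (comb) constraints, on top of the same machinery again for the last round. The paper's use of classical circuit descriptions for the short phase sidesteps that entirely. Also, the paper does not pass through $\mathsf{QIP}[\log,\mathsf{poly}]$ as an intermediate class; it builds the $\mathsf{QMA}$ verifier directly. Your detour through that class is harmless but unnecessary once the short-phase unitaries are given classically.
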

Here \QMGFPyuL{\OraXXrdX{n}}{c}{s} is the class described
above, with completeness-soundness gap being separated by some
inverse polynomial function of the input length.
For a rigorous description of the class see
Definition~\ref{def:QIP_short}, and for the notation
see the discussion in Section~\ref{sec:preliminaries}.
\par
The remainder of the paper is organized as follows.
Section~\ref{sec:preliminaries} discusses the background
theorems and definitions needed for the rest of the paper.
The proof of the main theorem is split into two.
Section~\ref{sec:prover_space_compression} describes
how to deal with the short interaction, and
using that Section~\ref{sec:main_theorem_proof}
finishes the proof by discussing how to handle the
last round.
This part closely follows the corresponding
proof in \cite{Beigi2011}.
\section{Preliminaries}
\label{sec:preliminaries}
We assume familiarity with quantum information \cite{Watrous2008},
computation \cite{Nielsen2000} and computational complexity
\cite{Watrous2008a}; such as
mixed states, unitary operations, quantum channels,
representations of quantum channels, quantum de Finetti
theorems, state tomography and complexity classes like
\VFQyzTUk{} and \WfUTAOJK.
The purpose of this section is to present the notations
and background information (definitions,
theorems) required to understand the following
two sections.
\par
We denote the set of functions of $n$ that are
upper-bounded by some polynomial in $n$ by \nkZOryEx{n}.
If the argument is clear, we omit it and just
write \vpmfnZuX.
Similarly, we write \OraXXrdX{n} or \EtgXbDyQ{} for
the set of functions that are in \WctPVadE{n}.
We try to follow the notations used in
\cite{Watrous2008,Beigi2011}.
When we talk about a quantum register (\nGRKiFQI{R})
of size $k$, we mean the object made up of
$k$ qubits.
It has associated \WvEhIjGd{}
$ \zkKUsJUE{R} = \sFPODjqO^{2^k} $.
\xMawpTFS{\zkKUsJUE{R}} denotes the space of all linear
mappings from \zkKUsJUE{R} to itself,
and the set of all density operators on \zkKUsJUE{R}
is denoted by \ppKvmYcj{\zkKUsJUE{R}}.
The adjoint of $ \FdVGMifg{X} \in \xMawpTFS{\zkKUsJUE{R}} $
is denoted by \znbdCPOj{\FdVGMifg{X}},
and the trace norm of \FdVGMifg{X} by
\RtRjhAaz{\FdVGMifg{X}}.
The trace distance between \FdVGMifg{X} and \FdVGMifg{Y}
is defined as
\[ \fsXxynbX{\FdVGMifg{X}}{\FdVGMifg{Y}} . \]
\par
A quantum channel ($\Phi$) is a completely positive
and trace-preserving linear map of the form
\ZVVBqftq{\Phi}{\xMawpTFS{\zkKUsJUE{Q}}}{\xMawpTFS{\zkKUsJUE{R}}}.
The set of all such channels is denoted by
\zRYfDEHl{\zkKUsJUE{Q}}{\zkKUsJUE{R}}.
For any $ \Phi \in \zRYfDEHl{ \sFPODjqO^{2^k} }
{ \sFPODjqO^{2^{\ell}} } $ the normalized \LsetZeCP{}
of $ \Phi $ is defined to be
\[ \rho_{\Phi} \in
\ppKvmYcj{ \sFPODjqO^{2^{\ell}} \RoHOARwu \sFPODjqO^{2^k} },
\quad \quad \rho_{\Phi} =
\frac{1}{2^k} \sum_{ x,y \in \ZcdnmWHx{0,1}^k }
\GPnJfPwC{\Phi}{\clWmGeKU{x}{y}} \RoHOARwu \clWmGeKU{x}{y}. \]
It can be generated by applying $\Phi$ on
one half of $k$ pairs of qubits in the state
$ \frac{1}{\sqrt{2}} \HAPnvheP{\CligxKFo{00} + \CligxKFo{11}} $.
If we are given $\rho_{\Phi}$ and an arbitrary
$ \sigma \in \ppKvmYcj{\sFPODjqO^{2^k}} $ then
there exist a simple procedure which produces
\GPnJfPwC{\Phi}{\sigma} with probability
$ 1 / 4^k $.
We will refer to it as `post-selection'.
For details see \cite[Section 2.1]{Beigi2011}.
\par
When we talk about a polynomial-time
quantum algorithm, we mean a quantum circuit
containing Hadamard (\SRoWkgaL),
$ \pi / 8 $ (\hWDmTLuj) and controlled-not
(\WUmziIIO) gates, and which can be generated
by a classical algorithm in polynomial-time.
The classes \VFQyzTUk{} and \WfUTAOJK{} have been defined
in \cite{Aharonov2002} and \cite{Watrous2003}
respectively, and we will use those definitions.
Now we want to define the \JgSMGwES{}s where
in the beginning there is a \OraXXrdX{n}-long
interaction which is followed by a
\nkZOryEx{n}-length message from the prover.
Note that in this setting we can assume, \EtVQAUvQ{}, that
all the messages except the last one are one
qubits, and the total number of rounds is at most \WctPVadE{n}.
This is because we can add dummy qubits
that are interspersed with the qubits sent by the other party.
We define the class according to this observation.
\begin{defi}
Let the class \QMGFPyuL{m}{c}{s} be the set of languages for which there exist
a quantum interactive proof system with the following properties.
The completeness parameter is $c$ and the soundness is $s$.
The proof system consists of $m$ rounds, each round is a
question-answer pair.
All questions and answers are one qubits except for the last
answer which is \nkZOryEx{n} qubits,
where $n$ is the length of the input.
See Figure~\ref{fig:three_round_QIP} for an example
with $ m=3 $.
\label{def:QIP_short}
\end{defi}
\begin{figure}[h]
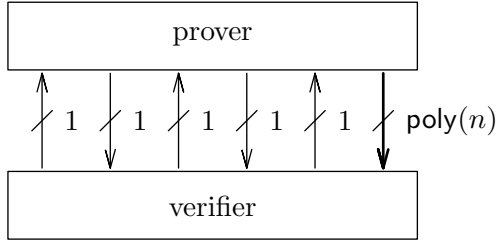

\centering
\begin{asy}
size(6.5cm);
defaultpen(fontsize(11));
void drawbox(int x, int y, string name){
draw((x, y)--(x+12, y)--(x+12, y+2)--(x, y+2)--cycle);
label(name, (x+6,y+1));
}
void drawarrow(int x, real y1, real y2, string name, pen p=currentpen, real size=7){
draw((x,y1)--(x,y2), p, EndArrow(SimpleHead, size));
pair m=(x, (y1+y2)/2);
real l=0.3;
draw((m.x-l, m.y-l)--(m.x+l, m.y+l));
label(name, (m.x + l + 0.1, m.y), E);
}
drawbox(0, 0, "verifier");
drawbox(0, 5, "prover");
drawarrow(1, 2.1, 4.9, "$1$");
drawarrow(3, 4.9, 2.1, "$1$");
drawarrow(5, 2.1, 4.9, "$1$");
drawarrow(7, 4.9, 2.1, "$1$");
drawarrow(9, 2.1, 4.9, "$1$");
pen p=currentpen + 1;
drawarrow(11, 4.9, 2.1, "$\mathsf{poly}(n)$", p);
\end{asy}
\caption{The interaction in the proof system of
Definition~\ref{def:QIP_short} in case $ m=3 $.}
\label{fig:three_round_QIP}
\end{figure}
A similar class, \sMiihMKK{c}{s} was defined in \cite{Beigi2011}
to be the class of problems for which there exist a one round
quantum interactive proof system, with completeness and
soundness parameters $c$ and $s$.
Additionally the verifier's question has length \OraXXrdX{n},
and the prover's answer is \nkZOryEx{n} qubits.
\begin{rem}
The following inclusion is trivially true between the above classes.
\[ \sMiihMKK{c}{s} \subseteq \QMGFPyuL{\OraXXrdX{n}}{c}{s} , \]
for all values of $c$ and $s$.
\end{rem}
In \cite{Beigi2011} it was proven that in their setting
the question from the verifier is unnecessary.
Or more precisely:
\begin{thm}[\cite{Beigi2011}]
Let $ c,s : \bSkTgkBm \rightarrow \HAPnvheP{0,1} $ be polynomial-time
computable functions \keQZCenD{}
$ \GPnJfPwC{c}{n} - \GPnJfPwC{s}{n} \in 1/\nkZOryEx{n} $.
Then $ \sMiihMKK{c}{s} = \VFQyzTUk $.
\label{thm:QIPlogpoly_equals_QMA}
\end{thm}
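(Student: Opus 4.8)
The plan is to prove the two inclusions separately. The inclusion $\VFQyzTUk \subseteq \sMiihMKK{c}{s}$ is immediate: the verifier sends an arbitrary (say, all-zero) $\OraXXrdX{n}$-qubit question, the prover ignores it and returns its \VFQyzTUk{} witness as the $\nkZOryEx{n}$-qubit answer, and the verifier runs the \VFQyzTUk{} verification procedure unchanged. Since \VFQyzTUk{} can be realized with completeness $c$ and soundness $s$ whenever $\GPnJfPwC{c}{n} - \GPnJfPwC{s}{n} \in 1/\nkZOryEx{n}$, this reproduces the required parameters. The entire content is therefore the reverse inclusion $\sMiihMKK{c}{s} \subseteq \VFQyzTUk$, \DwcWexfp{} showing that the verifier's short question can be discarded.

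For the reverse inclusion, fix a one-round proof system as in the definition, where the verifier prepares a state on a private register together with an $\OraXXrdX{n}$-qubit question register \zkKUsJUE{Q}, and the honest prover's reply is obtained by applying a channel $\Phi \in \zRYfDEHl{\zkKUsJUE{Q}}{\zkKUsJUE{A}}$ to \zkKUsJUE{Q}. Because \zkKUsJUE{Q} has only $k = \WctPVadE{n}$ qubits, the normalized \LsetZeCP{} $\rho_{\Phi}$ is a density operator on $\nkZOryEx{n}$ qubits and hence a legal \VFQyzTUk{} witness. The \VFQyzTUk{} verifier I would build asks for $\rho_{\Phi}$, prepares the original verifier's question, and uses the post-selection procedure recalled in Section~\ref{sec:preliminaries} to apply $\Phi$ to the question register out of $\rho_{\Phi}$; since $k = \WctPVadE{n}$ this succeeds with probability $1/4^{k} \in 1/\nkZOryEx{n}$. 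Conditioned on success it runs the original acceptance measurement, and on failure it rejects. For a \textsc{yes} instance the honest $\rho_{\Phi}$ then makes the new verifier accept with probability at least $\GPnJfPwC{c}{n}/4^{k}$.

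The delicate point, and the step I expect to be the main obstacle, is soundness, because a cheating \VFQyzTUk{} prover may send a state $\omega$ that is not the \LsetZeCP{} of any channel: its reduced state on the question register need not be maximally mixed, so $\omega$ encodes a completely positive but non-trace-preserving map. Such a map can simultaneously inflate the post-selection success probability and boost the acceptance probability past the original soundness $s$, so the naive bound fails. To control this I would have the verifier demand several independent copies of the witness, randomly permute them, and invoke a quantum de Finetti theorem to reduce to the case where the state is (inverse-polynomially close to) a tensor power $\rho^{\RoHOARwu k}$. A few copies are then spent on state tomography of the $\OraXXrdX{n}$-qubit reduced state on the question register, certifying to within $1/\nkZOryEx{n}$ that it is maximally mixed, equivalently that $\rho$ is approximately the \LsetZeCP{} of a genuine channel; the remaining copies carry the actual channel action. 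Here the smallness $d = 2^{k} \in \nkZOryEx{n}$ of the question register is exactly what makes the tomography and the de Finetti error affordable. Certified approximate Choi-validity then yields, for a \textsc{no} instance, acceptance at most $\GPnJfPwC{s}{n}/4^{k} + \delta$ over all witnesses, where $\delta \in 1/\nkZOryEx{n}$ can be pushed below any prescribed inverse polynomial by increasing the number of copies.

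Putting the two cases together, the constructed \VFQyzTUk{} protocol separates completeness from soundness by at least $\HAPnvheP{\GPnJfPwC{c}{n} - \GPnJfPwC{s}{n}}/4^{k} - \delta$; choosing $\delta$ below $\HAPnvheP{\GPnJfPwC{c}{n} - \GPnJfPwC{s}{n}}/\HAPnvheP{2 \cdot 4^{k}}$ keeps this in $1/\nkZOryEx{n}$, since $4^{k} \in \nkZOryEx{n}$ and $\GPnJfPwC{c}{n} - \GPnJfPwC{s}{n} \in 1/\nkZOryEx{n}$. Standard \VFQyzTUk{} error reduction \cite{Marriott2005} then amplifies this inverse-polynomial gap down to the error in the definition of \VFQyzTUk, giving $\sMiihMKK{c}{s} \subseteq \VFQyzTUk$; combined with the trivial inclusion this proves the theorem. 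The only place that needs genuine care is the soundness certification above; everything else is parameter bookkeeping together with appeals to the post-selection procedure and to \VFQyzTUk{} amplification.
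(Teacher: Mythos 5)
Your proposal is correct and follows essentially the same route as the paper: the paper states this theorem as a citation to \cite{Beigi2011}, and the argument it reproduces in Section~\ref{sec:main_theorem_proof} for the last round of its own main theorem is exactly your construction --- Choi-state witness $\rho_{\Phi}$, channel simulation by post-selection with success probability $1/4^q$, many copies with random permutation plus the de Finetti theorem to reduce to near-tensor-power witnesses, tomography certifying the input marginal is close to maximally mixed, and a final inverse-polynomial gap closed by standard \VFQyzTUk{} amplification. Your identification of the soundness certification (ruling out non-trace-preserving Choi-like witnesses) as the crux matches the paper's two-case analysis precisely.
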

In the next sections we prove that the seemingly stronger
class of Definition~\ref{def:QIP_short} also has the power
of \VFQyzTUk{} if $ m = \WctPVadE{n} $.
For this we will need the following theorems.
\begin{thm}[\cite{Nielsen2000}, Chapter~4.5.2]
An arbitrary unitary operator on $\ell$ qubits can be implemented using a circuit
containing \UpcxRgfA{\ell^2 4^{\ell}} single qubit and \WUmziIIO{} gates.
\label{thm:unitary_to_circuit}
\end{thm}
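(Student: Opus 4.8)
The plan is to prove the bound in two stages, following the standard universality argument for the gate set $\ZcdnmWHx{\SRoWkgaL, \hWDmTLuj, \WUmziIIO}$ fixed in Section~\ref{sec:preliminaries}. A unitary $U$ on $\ell$ qubits is a $2^{\ell} \times 2^{\ell}$ unitary matrix, and the first step is to write it as a product of \emph{two-level} unitaries, \DwcWexfp{} unitaries that act as the identity on all but two of the $2^{\ell}$ computational basis states. I would obtain this decomposition by a Gaussian-elimination-style sweep: multiplying $U$ on the left by two-level rotations, one cancels the sub-diagonal entries column by column, so that after the $j$-th column is cleared the matrix agrees with the identity on its first $j$ rows and columns, and after the final sweep one is left with the identity. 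Inverting this sequence expresses $U$ itself as a product of two-level unitaries. Clearing column $j$ costs at most $2^{\ell} - j$ two-level factors, and summing the arithmetic series over all columns yields at most $\binom{2^{\ell}}{2} = \UpcxRgfA{4^\ell}$ of them.

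The second and more delicate step is to implement a single two-level unitary, acting nontrivially only on the subspace spanned by $\CligxKFo{s}$ and $\CligxKFo{t}$ with $s,t \in \ZcdnmWHx{0,1}^{\ell}$, using the elementary gates. I would fix a Gray-code path $s = g_0, g_1, \ldots, g_r = t$ in which consecutive strings differ in exactly one coordinate and $r \le \ell$. Walking along the path with multiply-controlled \WUmziIIO{} gates, one relabels the active basis state from $\CligxKFo{s}$ until it agrees with $\CligxKFo{t}$ in all but one coordinate; at that point the nontrivial $2 \times 2$ block is applied as a single multiply-controlled single-qubit gate on the differing wire; finally the Gray-code steps are reversed. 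This uses $\UpcxRgfA{\ell}$ controlled operations, each controlled on up to $\ell - 1$ wires. Invoking the standard ladder construction that decomposes such an $(\ell-1)$-fold-controlled gate into $\UpcxRgfA{\ell}$ single-qubit and \WUmziIIO{} gates, each two-level unitary costs $\UpcxRgfA{\ell^2}$ elementary gates.

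Multiplying the two counts gives $\UpcxRgfA{4^\ell} \cdot \UpcxRgfA{\ell^2} = \UpcxRgfA{\ell^2 4^\ell}$ single-qubit and \WUmziIIO{} gates, as claimed. I expect the main obstacle to lie in the second stage, and within it in the decomposition of the multiply-controlled gates: one must verify that an $(\ell-1)$-fold-controlled single-qubit unitary really does reduce to $\UpcxRgfA{\ell}$ elementary gates, which rests on expressing generalized Toffoli gates as cascades of ordinary Toffoli gates (each in turn a constant number of \SRoWkgaL, \hWDmTLuj, and \WUmziIIO{} gates) and on the careful management of the work qubits borrowed from the remaining wires. By comparison the first stage is a routine induction once the two-level sweep is written down.
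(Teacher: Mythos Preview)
The paper does not supply its own proof of this theorem; it is quoted from \cite{Nielsen2000}, Chapter~4.5.2, as a background result. Your two-stage outline --- factor $U$ into $\UpcxRgfA{4^{\ell}}$ two-level unitaries by a Gaussian-elimination sweep, then realize each two-level unitary via a Gray-code walk of $\UpcxRgfA{\ell}$ multiply-controlled gates, each expanded into $\UpcxRgfA{\ell}$ single-qubit and \WUmziIIO{} gates --- is exactly the argument given in that reference, so there is nothing further to compare. One minor remark on the obstacle you single out: in the Gray-code step the $(\ell-1)$-fold-controlled gates occupy all $\ell$ wires, so there are no ``remaining wires'' to borrow; the linear-size Barenco-style decomposition then requires one fresh ancilla, which the statement's phrasing (a circuit implementing the unitary) implicitly allows.
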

The next theorem follows from the Solovay--Kitaev theorem
\cite{Kitaev1997}, and also appears in \cite{Nielsen2000}.
\begin{thm}
For any unitary operator \FdVGMifg{U} on one qubit
and $ \varepsilon > 0 $, there exist a circuit
$ \HiQndmIP{C}_{\FdVGMifg{U}, \varepsilon} $ \keQZCenD{}
$ \HiQndmIP{C}_{\FdVGMifg{U}, \varepsilon} $ is made up of
\UpcxRgfA{\GPnJfPwC{\log ^3}{1 / \varepsilon}} gates from the set
\ZcdnmWHx{\SRoWkgaL, \hWDmTLuj}, and for all $\CligxKFo{\varphi} \in \sFPODjqO^2$
it holds that
\[ \fsXxynbX{ \FdVGMifg{U} \sCUPypWN{\varphi} \znbdCPOj{\FdVGMifg{U}} }
{ \sCUPypWN{\xi} } \leq \varepsilon , \]
where $ \CligxKFo{\xi} \XYYXlqUI
\GPnJfPwC{\HiQndmIP{C}_{\FdVGMifg{U}, \varepsilon}}{\CligxKFo{\varphi}} $.
\label{thm:Solovay-Kitaev}
\end{thm}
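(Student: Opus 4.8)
The content of the statement is exactly the Solovay--Kitaev theorem, specialised to the universal gate set $\{\mathbf{H},\mathbf{T}\}$ and phrased in terms of the output state rather than the unitary itself. Accordingly, the plan is to (i) recall that $\{\mathbf{H},\mathbf{T}\}$ generates a dense subgroup of the single--qubit unitaries modulo global phase, (ii) invoke the recursive Solovay--Kitaev construction to approximate $\mathbf{U}$ (up to a global phase) in operator norm by a product of $O(\log^{3}(1/\varepsilon))$ gates from $\{\mathbf{H},\mathbf{T}\}$, and (iii) convert this operator--norm bound into the claimed bound on the trace distance of the output states. Throughout we may assume $\varepsilon \le 1$, since otherwise the trace distance is trivially below $\varepsilon$.

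For step (ii) I would recall the structure of the recursion, as it is what produces the polylogarithmic gate count. One fixes a base precision $\varepsilon_{0}$ and, by exhaustively enumerating all words in $\mathbf{H},\mathbf{T}$ up to some constant length, obtains a finite net that $\varepsilon_{0}$--covers the group; density of $\langle\mathbf{H},\mathbf{T}\rangle$ is what guarantees such a net exists. Inverses, needed below, cost only a constant factor because $\mathbf{H}^{-1}=\mathbf{H}$ and $\mathbf{T}^{-1}=\mathbf{T}^{7}$. The inductive step takes an $\varepsilon_{n-1}$--approximation $\mathbf{U}_{n-1}$ of $\mathbf{U}$, writes the residual $\mathbf{U}\mathbf{U}_{n-1}^{-1}$ (which lies within $\varepsilon_{n-1}$ of the identity) as a balanced group commutator $\mathbf{V}\mathbf{W}\mathbf{V}^{-1}\mathbf{W}^{-1}$ with $\mathbf{V},\mathbf{W}$ each within $O(\sqrt{\varepsilon_{n-1}})$ of the identity, approximates $\mathbf{V}$ and $\mathbf{W}$ recursively, and recombines. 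This gives a recursion $\varepsilon_{n}\le C\,\varepsilon_{n-1}^{3/2}$ in which the gate count grows by only a constant factor per level, so pushing $\varepsilon_{n}$ down to $\varepsilon$ costs $O(\log^{c}(1/\varepsilon))$ gates for some constant $c<4$; refined commutator decompositions lower $c$ to the value cited in \cite{Nielsen2000}, and the precise exponent is immaterial for our application. The geometric heart of the argument---the shrinking lemma realising a near--identity unitary as a commutator of unitaries quadratically closer to the identity---is the one genuinely nontrivial ingredient, and it is exactly what I would import from \cite{Kitaev1997}.

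Steps (i) and (iii) exploit the fact that the statement is deliberately phrased to be phase--insensitive. Since $\mathbf{H}$ and $\mathbf{T}$ do not lie in $\mathrm{SU}(2)$, the construction matches $\mathbf{U}$ only up to a scalar: it produces a phase $\theta$ with $\left\| \mathbf{U} - e^{i\theta}C_{\mathbf{U},\varepsilon} \right\| \le \varepsilon$ in operator norm (equivalently, one may run the whole recursion inside $\mathrm{SU}(2)/\{\pm I\}\cong\mathrm{SO}(3)$, where $\langle\mathbf{H},\mathbf{T}\rangle$ is dense). Fixing a unit vector $| \varphi \rangle$ and setting $| a \rangle = \mathbf{U} | \varphi \rangle$ and $| \xi \rangle = C_{\mathbf{U},\varepsilon} | \varphi \rangle$, the operator--norm bound yields $\left\| \, | a \rangle - e^{i\theta} | \xi \rangle \, \right\| \le \varepsilon$, whence $\mathrm{Re}\!\left(e^{i\theta}\langle a | \xi \rangle\right) \ge 1 - \varepsilon^{2}/2$ and therefore $\left| \langle a | \xi \rangle \right| \ge 1 - \varepsilon^{2}/2$. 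The phase $\theta$ has now dropped out, which is precisely why the statement compares the outer products $| a \rangle\!\langle a | = \mathbf{U} | \varphi \rangle\!\langle \varphi | \mathbf{U}^{*}$ and $| \xi \rangle\!\langle \xi |$. Using the standard formula for the trace distance between two pure states,
\[
\tfrac{1}{2}\left\| \, | a \rangle\!\langle a | - | \xi \rangle\!\langle \xi | \, \right\|_{\mathrm{Tr}} = \sqrt{1 - \left| \langle a | \xi \rangle \right|^{2}} \le \sqrt{1 - \left(1 - \varepsilon^{2}/2\right)^{2}} \le \varepsilon,
\]
which is exactly the claimed inequality. Hence setting the Solovay--Kitaev precision to $\varepsilon$ suffices, and the circuit $C_{\mathbf{U},\varepsilon}$ of step (ii) has the required $O(\log^{3}(1/\varepsilon))$ gates.

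The main obstacle lies entirely within step (ii): establishing, or here correctly invoking, the shrinking/commutator lemma and verifying that the error contracts strictly faster than the length grows. Steps (i) and (iii) are routine once one observes that the density--operator formulation is exactly what absorbs the unwanted global phases of $\mathbf{H}$ and $\mathbf{T}$.
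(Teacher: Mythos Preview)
Your proposal is correct. The paper itself does not prove this statement at all: it simply introduces it with the sentence ``The next theorem follows from the Solovay--Kitaev theorem \cite{Kitaev1997}, and also appears in \cite{Nielsen2000}'' and then states it without proof, treating it as a black-box import from the literature. Your sketch therefore goes well beyond what the paper does: you outline the actual Solovay--Kitaev recursion (base net, balanced-commutator decomposition of the residual, the shrinking lemma, and the resulting $\varepsilon_n \le C\varepsilon_{n-1}^{3/2}$ contraction), and you add the clean step~(iii) showing how the operator-norm-up-to-phase guarantee converts to the trace-distance-on-outputs form in which the paper phrases the result. That last observation---that the density-operator formulation is precisely what absorbs the global phase of $\mathbf{H}$ and $\mathbf{T}$---is a nice point that the paper leaves entirely implicit. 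In short, both you and the paper reduce the statement to Solovay--Kitaev; the paper stops there, while you also indicate how Solovay--Kitaev is proved.
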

The following is corollary to
Theorem~\ref{thm:unitary_to_circuit}
and \ref{thm:Solovay-Kitaev}.
\begin{cor}
For any unitary operator \FdVGMifg{U} on $\ell$ qubits
and $ \varepsilon > 0 $, there exist a circuit
$ \HiQndmIP{C}_{\FdVGMifg{U}, \varepsilon} $ \keQZCenD{}
$ \HiQndmIP{C}_{\FdVGMifg{U}, \varepsilon} $ is made up of
\UpcxRgfA{ 5^{\ell} \cdot \GPnJfPwC{\log ^3}{5^{\ell} / \varepsilon} }
gates from the set
\ZcdnmWHx{\SRoWkgaL, \hWDmTLuj, \WUmziIIO}, and for all $\CligxKFo{\varphi} \in \sFPODjqO^{2^{\ell}}$
it holds that
\[ \fsXxynbX{ \FdVGMifg{U} \sCUPypWN{\varphi} \znbdCPOj{\FdVGMifg{U}} }
{ \sCUPypWN{\xi} } \leq \varepsilon , \]
where $ \CligxKFo{\xi} \XYYXlqUI
\GPnJfPwC{\HiQndmIP{C}_{\FdVGMifg{U}, \varepsilon}}{\CligxKFo{\varphi}} $.
\label{cor:unitary_approx_by_circuit}
\end{cor}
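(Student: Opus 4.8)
The plan is to chain the two preceding results. First I would invoke Theorem~\ref{thm:unitary_to_circuit} to write $ \FdVGMifg{U} = \FdVGMifg{U}_N \cdots \FdVGMifg{U}_1 $ with $ N \in \UpcxRgfA{\ell^2 4^{\ell}} $, where each $ \FdVGMifg{U}_i $ is either a \WUmziIIO{} gate or a single-qubit unitary acting on one of the $\ell$ qubits (tensored with the identity on the other $\ell - 1$). The \WUmziIIO{} gates already lie in $ \ZcdnmWHx{\SRoWkgaL, \hWDmTLuj, \WUmziIIO} $ and are left untouched, while each single-qubit factor $ \FdVGMifg{U}_i $ is replaced by the circuit $ \HiQndmIP{C}_{\FdVGMifg{U}_i, \delta} $ from Theorem~\ref{thm:Solovay-Kitaev}, for a precision $\delta$ to be fixed. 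Writing $ \FdVGMifg{V}_i $ for the gate thus obtained and $ W \XYYXlqUI \FdVGMifg{V}_N \cdots \FdVGMifg{V}_1 $, the circuit $ \HiQndmIP{C}_{\FdVGMifg{U}, \varepsilon} $ to be produced is exactly $W$, built only from gates in $ \ZcdnmWHx{\SRoWkgaL, \hWDmTLuj, \WUmziIIO} $.

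The key step is to bound the accumulated error, which I would carry out in the operator norm $ \xxsmCMTw{\cdot} $ because it composes submultiplicatively and is invariant under tensoring with the identity. A standard telescoping (hybrid) argument, using that all $ \FdVGMifg{U}_i $ and $ \FdVGMifg{V}_i $ are unitary, gives $ \xxsmCMTw{ \FdVGMifg{U} - W } \le \sum_{i=1}^{N} \xxsmCMTw{ \FdVGMifg{U}_i - \FdVGMifg{V}_i } $. The $i$-th summand vanishes for a \WUmziIIO{} step; for a single-qubit step it equals the operator-norm distance of the underlying one-qubit gates (the identity on the other $\ell - 1$ qubits drops out), which by Theorem~\ref{thm:Solovay-Kitaev}, after translating its trace-distance guarantee into the operator norm and discarding an overall phase, is at most $ c \delta $ for a universal constant $c$. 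Hence $ \xxsmCMTw{ \FdVGMifg{U} - W } \le c N \delta $. Finally, for unitaries the output trace distance is dominated by the operator-norm distance, so $ \fsXxynbX{ \FdVGMifg{U} \sCUPypWN{\varphi} \znbdCPOj{\FdVGMifg{U}} }{ W \sCUPypWN{\varphi} \znbdCPOj{W} } \le \xxsmCMTw{ \FdVGMifg{U} - W } \le c N \delta $ for every $ \CligxKFo{\varphi} $; taking $ \delta \XYYXlqUI \varepsilon / \HAPnvheP{c N} $ makes this at most $\varepsilon$, as required.

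It remains to count gates and simplify. With $ \delta \XYYXlqUI \varepsilon / \HAPnvheP{c N} $, Theorem~\ref{thm:Solovay-Kitaev} realizes each single-qubit factor by $ \UpcxRgfA{ \GPnJfPwC{\log^3}{N / \varepsilon} } $ gates (the constant $c$ is absorbed into the logarithm), and each \WUmziIIO{} contributes one gate, so $W$ uses $ \UpcxRgfA{ N \cdot \GPnJfPwC{\log^3}{N / \varepsilon} } $ gates in total. Using $ N \in \UpcxRgfA{\ell^2 4^{\ell}} $ together with $ \ell^2 4^{\ell} \in \UpcxRgfA{5^{\ell}} $ (since $ \ell^2 \HAPnvheP{4/5}^{\ell} $ is bounded) and $ \GPnJfPwC{\log}{N / \varepsilon} \in \UpcxRgfA{ \GPnJfPwC{\log}{5^{\ell} / \varepsilon} } $, this simplifies to $ \UpcxRgfA{ 5^{\ell} \cdot \GPnJfPwC{\log^3}{5^{\ell} / \varepsilon} } $, which is the claimed bound. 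The only genuinely delicate points are the error accumulation and the two elementary norm facts it relies on (invariance of the operator norm under tensoring with the identity, and domination of the output trace distance by the operator-norm distance for unitaries); everything else is routine asymptotic bookkeeping.
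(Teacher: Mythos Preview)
Your proposal is correct and follows exactly the route the paper intends: the paper states the result merely as a corollary of Theorem~\ref{thm:unitary_to_circuit} and Theorem~\ref{thm:Solovay-Kitaev} without giving any details, and your argument---decompose into $\UpcxRgfA{\ell^2 4^{\ell}}$ one-qubit and \WUmziIIO{} gates, Solovay--Kitaev approximate each one-qubit gate to precision $\varepsilon/\UpcxRgfA{N}$, telescope the errors, and simplify $\ell^2 4^{\ell}$ to $\UpcxRgfA{5^{\ell}}$---is precisely the standard way to fill this in.
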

\begin{lem}[Lemma 1 of \cite{Beigi2011}]
Let $ \rho \in \ppKvmYcj{\sFPODjqO^{2^q}} $ be a
state on $ q = \UpcxRgfA{\log n} $ qubits.
For any $ \varepsilon \in 1 / \nkZOryEx{n} $,
choose $N$ \keQZCenD{} $ N \geq 2^{10q} / \varepsilon^3 $
and $ N \in \nkZOryEx{n} $.
If $ \rho^{\RoHOARwu N} $ is given to a
\nkZOryEx{n}-time quantum machine, then it
can perform quantum state tomography,
and get a classical description
$ \xi \in \xMawpTFS{\sFPODjqO^{2^q}} $ of
$ \rho $, which with probability at least
$ 1 - \varepsilon $ satisfies
\[ \RtRjhAaz{\rho - \xi} < \varepsilon . \]
\label{lem:state_tomography}
\end{lem}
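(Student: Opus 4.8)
The plan is to run the textbook Pauli-basis tomography scheme and then verify that, thanks to $ q = \UpcxRgfA{\log n} $, every resource it uses stays polynomial in $n$. Write the $q$-qubit Pauli strings as $ \ZcdnmWHx{P_a}_{a \in \ZcdnmWHx{0,1,2,3}^q} $; these are $ 4^q $ Hermitian operators with $ P_a^2 = \fAEoNWeb $ and $ \GPnJfPwC{\FyjDddRU}{P_a P_b} = 2^q \delta_{ab} $, so they form an orthogonal basis of the Hermitian operators and we may expand $ \rho = 2^{-q} \sum_a c_a P_a $ with real coefficients $ c_a = \GPnJfPwC{\FyjDddRU}{P_a \rho} \in \OLNiNHyK{-1,1} $ and $ c_0 = 1 $ for the identity string. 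The algorithm estimates each $ c_a $ and outputs the classical matrix $ \xi \XYYXlqUI 2^{-q} \sum_a \hat{c}_a P_a \in \xMawpTFS{\sFPODjqO^{2^q}} $; note that the lemma only asks for $ \xi \in \xMawpTFS{\sFPODjqO^{2^q}} $, so there is no need to project $ \xi $ back onto the density operators.

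To estimate a single coefficient, the machine measures the observable $ P_a $ --- a product of single-qubit measurements in the $X$-, $Y$- or $Z$-eigenbasis --- on a fresh batch of $M$ copies of $\rho$ and averages the $ \pm 1 $ outcomes; the empirical mean $ \hat{c}_a $ has expectation $ c_a $, so Hoeffding's inequality gives $ \Pr\OLNiNHyK{ |\hat{c}_a - c_a| > \delta } \leq 2 e^{-M \delta^2 / 2} $. I would fix the per-string failure probability to $ \varepsilon / 4^q $ and union bound over all $ 4^q $ strings, so that with probability at least $ 1 - \varepsilon $ every estimate satisfies $ |\hat{c}_a - c_a| \leq \delta $ simultaneously; this costs $ M \in \UpcxRgfA{ \delta^{-2} \log(4^q / \varepsilon) } $ copies per string.

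Conditioned on this good event, the error operator is $ \rho - \xi = 2^{-q} \sum_a \HAPnvheP{ c_a - \hat{c}_a } P_a $, and orthogonality of the Paulis gives $ \BVOiHqKx{\rho - \xi}{2}^{2} = \GPnJfPwC{\FyjDddRU}{ \HAPnvheP{\rho - \xi}^2 } = 2^{-q} \sum_a \HAPnvheP{ c_a - \hat{c}_a }^2 \leq 2^q \delta^2 $. Since $ \rho - \xi $ lives on a space of dimension $ 2^q $, the Cauchy--Schwarz bound $ \RtRjhAaz{A} \leq 2^{q/2} \BVOiHqKx{A}{2} $ then yields $ \RtRjhAaz{\rho - \xi} \leq 2^q \delta $. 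Choosing $ \delta \XYYXlqUI \varepsilon \cdot 2^{-q} / 2 $ therefore forces $ \RtRjhAaz{\rho - \xi} \leq \varepsilon / 2 < \varepsilon $, as required.

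It remains to check the resource count. With $ \delta = \varepsilon \cdot 2^{-q}/2 $ we get $ M \in \UpcxRgfA{ 4^q \varepsilon^{-2} \log(4^q/\varepsilon) } $, and spending a separate batch on each of the $ 4^q $ strings gives a total copy budget $ N \in \UpcxRgfA{ 16^q \varepsilon^{-2} \log(4^q/\varepsilon) } $; the crude estimate $ \log(4^q/\varepsilon) \leq 4^q/\varepsilon $ shows $ N \in \UpcxRgfA{ 64^q \varepsilon^{-3} } $, comfortably below the stated $ 2^{10q}/\varepsilon^3 $. The only conceptual point is that the factor $ 2^q $ lost in the trace-versus-Frobenius conversion --- which would be prohibitive for a general state --- is harmless here: because $ q = \UpcxRgfA{\log n} $, all of $ 2^q, 4^q, 16^q $ and $ \varepsilon^{-1} $ lie in $ \nkZOryEx{n} $, so $ N \in \nkZOryEx{n} $, the $ 4^q $ measurement settings are enumerable in polynomial time, and writing down the $ 2^q \times 2^q $ matrix $ \xi $ costs only $ \nkZOryEx{n} $ time; hence the whole procedure runs on a $ \nkZOryEx{n} $-time quantum machine. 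I expect the main (and essentially bookkeeping) obstacle to be keeping the exponents in $q$ straight through the union bound and the norm conversion, after which polynomiality follows automatically from $ q = \UpcxRgfA{\log n} $.
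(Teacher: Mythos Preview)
Your argument is correct: the Pauli-basis estimator, Hoeffding plus a union bound over the $4^q$ strings, and the $\RtRjhAaz{\cdot} \le 2^{q/2}\BVOiHqKx{\cdot}{2}$ conversion together give exactly the stated trace-distance and sample-complexity guarantees, and the $q=\UpcxRgfA{\log n}$ hypothesis makes every $2^{\Theta(q)}$ factor polynomial.

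There is, however, nothing in the present paper to compare it against: the lemma is quoted verbatim from \cite{Beigi2011} and is not re-proved here. The argument sketched in that reference is the same textbook Pauli-tomography scheme you wrote down --- estimate the $4^q$ expectation values $\GPnJfPwC{\FyjDddRU}{P_a\rho}$ to additive accuracy $\UpcxRgfA{2^{-q}\varepsilon}$, reconstruct, and bound the trace error through the Hilbert--Schmidt norm --- so your proposal is essentially the intended proof rather than an alternative one. The only cosmetic difference is that you carry the bookkeeping through explicitly to land inside the generous $2^{10q}/\varepsilon^3$ budget, whereas the cited source just asserts polynomiality.
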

\begin{thm}[quantum de Finetti theorem \cite{Christandl2007};
this form is from \cite{Watrous2008}]
Let $ \nGRKiFQI{X}_1, \ldots, \nGRKiFQI{X}_n $ be identical quantum registers,
each having associated space $\sFPODjqO^d$,
and let $ \rho \in \ppKvmYcj{ \sFPODjqO^{dn} } $ be the state of these registers.
Suppose that for all permutation $ \pi \in S_n $ it holds that
$ \rho = \FdVGMifg{W}_{\pi} \rho \znbdCPOj{\FdVGMifg{W}_{\pi}} $,
where $ \FdVGMifg{W}_{\pi} $ permutes the contents of
$ \nGRKiFQI{X}_1, \ldots, \nGRKiFQI{X}_n $
according to $\pi$.
Then for any choice of $ k \in \ZcdnmWHx{2,3, \ldots, n-1} $
there exists a number $ N \in \bSkTgkBm $,
a probability vector $ p \in \oKlAveag^N $,
and a collection of density operators
$ \ZcdnmWHx{\sigma_i : i \in \BXycapFx{N}} \subset \ppKvmYcj{\sFPODjqO^d} $
such that
\[ \RtRjhAaz{ \rho^{\nGRKiFQI{X}_1 \cdots \nGRKiFQI{X}_k } -
\sum_{i=1}^{N} p_i \sigma_{i}^{\RoHOARwu k} }
< \frac{4 d^2 k}{n} . \]
\label{thm:deFinetti}
\end{thm}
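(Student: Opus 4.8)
The plan is to prove the bound by the classical route: purify $\rho$ into the symmetric subspace, prove the pure-state version of the statement there, and then discard the purifying system. The factor $d^2$ in the bound is exactly the square of the local dimension that this purification introduces. First I would reduce to a pure symmetric state. I claim that every permutation-invariant $\rho \in \ppKvmYcj{\sFPODjqO^{dn}}$ has a purification $\CligxKFo{\psi}$ lying in the symmetric subspace of $\HAPnvheP{\sFPODjqO^d \RoHOARwu \sFPODjqO^d}^{\RoHOARwu n}$, regarded as $n$ copies of one $\sFPODjqO^{d^2}$-dimensional site whose first factor carries $\nGRKiFQI{X}_i$ and whose second factor is an ancilla $B$. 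To see this, diagonalize $\rho = \sum_j \lambda_j \sCUPypWN{v_j}$ and take the canonical purification $\CligxKFo{\psi} = \sum_j \sqrt{\lambda_j}\, \CligxKFo{v_j} \RoHOARwu \CligxKFo{v_j^*}$. For any unitary $\FdVGMifg{U}$ on $\sFPODjqO^{dn}$ the vector $\HAPnvheP{\FdVGMifg{U} \RoHOARwu \overline{\FdVGMifg{U}}} \CligxKFo{\psi}$ is the canonical purification of $\FdVGMifg{U} \rho \znbdCPOj{\FdVGMifg{U}}$; applying this with $\FdVGMifg{U} = \FdVGMifg{W}_\pi$ and using both $\FdVGMifg{W}_\pi \rho \znbdCPOj{\FdVGMifg{W}_\pi} = \rho$ and $\overline{\FdVGMifg{W}_\pi} = \FdVGMifg{W}_\pi$ (permutation matrices are real) shows that $\CligxKFo{\psi}$ is fixed by the simultaneous permutation of the $n$ system--ancilla pairs, hence lies in that symmetric subspace. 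Since tracing out the ancillae returns $\rho$, it suffices to treat $\CligxKFo{\psi}$.

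Next I would establish the pure-state statement. Writing $D \XYYXlqUI d^2$, for $\CligxKFo{\psi}$ in the symmetric subspace of $\HAPnvheP{\sFPODjqO^D}^{\RoHOARwu n}$ its reduction $\rho_k$ to any $k$ sites should satisfy $\RtRjhAaz{\rho_k - \int d\mu(\theta)\, \sCUPypWN{\theta}^{\RoHOARwu k}} < 4Dk/n$ for a suitable probability measure $\mu$ on the unit vectors of $\sFPODjqO^D$. The main tool is the coherent-state resolution of the identity on the symmetric subspace,
\[ \binom{n+D-1}{D-1} \int d\theta\, \sCUPypWN{\theta}^{\RoHOARwu n} = P_{\mathrm{sym}}, \]
with $d\theta$ the Haar-induced measure; this turns $\ZcdnmWHx{\binom{n+D-1}{D-1}\, \sCUPypWN{\theta}^{\RoHOARwu n}\, d\theta}$ into a POVM. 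I would then compare $\rho_k$ with the measure-and-prepare state obtained by estimating $\theta$ through this coherent POVM and re-preparing $\sCUPypWN{\theta}^{\RoHOARwu k}$, which by construction already has the required de Finetti form $\int d\mu(\theta)\, \sCUPypWN{\theta}^{\RoHOARwu k}$.

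The hard part will be the quantitative disturbance estimate producing the constant in $4Dk/n$. The scaling reflects the fact that appending $k$ extra copies barely changes the symmetric statistics, a phenomenon captured by the dimension ratio
\[ \frac{\binom{n+D-1}{D-1}}{\binom{n-k+D-1}{D-1}} = 1 + \UpcxRgfA{Dk/n}; \]
converting this overlap bound into a trace-norm bound on $\RtRjhAaz{\rho_k - \int d\mu(\theta)\, \sCUPypWN{\theta}^{\RoHOARwu k}}$, with the explicit constant, is the technical core and the step I expect to be the main obstacle.

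Finally, tracing out the ancilla $B$ of each site maps $\sCUPypWN{\theta}^{\RoHOARwu k}$ to $\sigma_\theta^{\RoHOARwu k}$ with $\sigma_\theta \XYYXlqUI \jYPzSEcs{B}{\sCUPypWN{\theta}} \in \ppKvmYcj{\sFPODjqO^d}$, turns $\rho_k$ into $\rho^{\nGRKiFQI{X}_1 \cdots \nGRKiFQI{X}_k}$, and cannot increase the trace norm, so the bound $4d^2k/n$ is preserved. To pass from the integral to the finite sum demanded by the statement, I would invoke Carath\'eodory's theorem: each $\sigma_\theta^{\RoHOARwu k}$ lives in the real vector space of Hermitian operators on $\sFPODjqO^{d^k}$, of dimension $d^{2k}$, so the average $\int d\mu(\theta)\, \sigma_\theta^{\RoHOARwu k}$ is a convex combination of at most $N = d^{2k}+1$ of them, which supplies the required $N \in \bSkTgkBm$, probability vector $p \in \oKlAveag^N$, and density operators $\ZcdnmWHx{\sigma_i : i \in \BXycapFx{N}}$.
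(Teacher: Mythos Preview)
The paper does not prove this theorem at all: it is stated in the preliminaries as a background result, with explicit attribution to \cite{Christandl2007} (and the formulation to \cite{Watrous2008}), and no proof is given. There is therefore nothing in the paper to compare your argument against.

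For what it is worth, your outline is essentially the Christandl--K\"onig--Mitchison--Renner proof that the paper cites: purify the permutation-invariant $\rho$ into the symmetric subspace of $(\mathbb{C}^{d^2})^{\otimes n}$, use the coherent-state resolution of the identity there to build a measure-and-prepare approximation whose error is governed by the ratio of symmetric-subspace dimensions, then trace out the ancillae and invoke Carath\'eodory to pass to a finite mixture. The steps you flag as routine are routine, and the step you flag as the technical core (turning the dimension ratio into the explicit trace-norm bound with constant $4$) is indeed where the work in \cite{Christandl2007} lies. So your proposal is sound, but it reproduces the cited literature rather than anything in the present paper.
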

\section{Compressing the prover's private space}
\label{sec:prover_space_compression}
This section proves that during the initial
short interaction the prover doesn't need to keep
too many qubits in its private memory.
This will make it easy to give the description of the
actions of the prover during these rounds as a
classical proof.
This idea of upper-bounding the prover's
private space has also appeared in \cite{Kobayashi2003a}.
\par
From now on, let $ L \in \QMGFPyuL{m+1}{c}{s} $,
let \HiQndmIP{V} be the verifier and
\HiQndmIP{P} be the (honest or dishonest) prover.
We now describe what happens in each but the last round,
so we can give a name to all quantum registers in the
process.
\par
In the beginning of round $i$ for $i \in \BXycapFx{m}$ the prover is holding
some private register from the previous round which
we call $\nGRKiFQI{P}_{i-1}$.
The verifier is holding the answer from the previous
round ($\nGRKiFQI{A}_{i-1}$) and his private register
($\nGRKiFQI{V}_{i-1}$).
Note that $\nGRKiFQI{A}_{i-1}$ is made up of one qubit
and $\nGRKiFQI{V}_{i-1}$ is \nkZOryEx{n} qubits.
The verifier applies a unitary transformation
$\FdVGMifg{W}_i$ on $\nGRKiFQI{A}_{i-1} \nGRKiFQI{V}_{i-1}$,
and gets the registers $\nGRKiFQI{Q}_i$ and
$\nGRKiFQI{V}_i$ as the output.
The one qubit $\nGRKiFQI{Q}_i$ holds the $i$th question,
while the \vpmfnZuX{}-qubit $\nGRKiFQI{V}_i$ is the verifier's
new private register.
The verifier then sends $\nGRKiFQI{Q}_i$ to the prover, who
applies the unitary $\FdVGMifg{U}_i$ on
$\nGRKiFQI{P}_{i-1} \nGRKiFQI{Q}_i$ and gets
$\nGRKiFQI{P}_i$ and $\nGRKiFQI{A}_i$ as the output.
At the end of the round the prover sends back $\nGRKiFQI{A}_i$
to the verifier.
Figure~\ref{fig:round_description_QIP} shows
a schematic of this procedure.
Note that it is \EtVQAUvQ{} that for all $i,j \in \BXycapFx{m}$,
$\nGRKiFQI{P}_i$ has the same number of quibts as
$\nGRKiFQI{P}_j$, and the same is true for all $\nGRKiFQI{V}_i$s.
Moreover all $\nGRKiFQI{Q}_i$s and $\nGRKiFQI{A}_i$s are
one qubits.
Since we introduce new registers whenever an operation takes place,
we can talk about `the state of a register' without confusion.
\UdxPemrT{} we set the sate of $\nGRKiFQI{P}_0 \nGRKiFQI{A}_0 \nGRKiFQI{V}_0$
to be \CligxKFo{00 \ldots 0}.
We didn't put any upper bound on the size of register $\nGRKiFQI{P}_i$.
However, the prover doesn't need arbitrary big private space,
neither in the honest nor in the dishonest case.
This is formalized by the following lemma.
\begin{figure}[h]
\centering
\begin{asy}
size(14cm);
defaultpen(fontsize(11));
void drawbox(int x, int y, int height, string name, string index){
draw((x, y)--(x+2, y)--(x+2, y+height)--(x, y+height)--cycle);
label(
"$\mathsf{" + name + "}_{" + index + "}$",
(x+1,y+(height/2)));
}
void drawarrow(real x1, int x2, int y, string name){
draw((x1, y)--(x2 - 0.1, y), EndArrow(SimpleHead));
label(name, ((x1+x2-0.1)/2, y), N);
}
int stepd = 6;
draw((-2, 7)--(4*stepd+4, 7), dashed);
drawbox(0, 0, 4, "V", "i-1");
drawbox(0, 4, 2, "A", "i-1");
drawbox(0, 10, 6, "P", "i-1");
drawbox(stepd, 0, 4, "V", "i");
drawbox(stepd, 4, 2, "Q", "i");
drawbox(stepd, 10, 6, "P", "i-1");
drawbox(stepd*2, 0, 4, "V", "i");
drawbox(stepd*2, 8, 2, "Q", "i");
drawbox(stepd*2, 10, 6, "P", "i-1");
drawbox(stepd*3, 0, 4, "V", "i");
drawbox(stepd*3, 8, 2, "A", "i");
drawbox(stepd*3, 10, 6, "P", "i");
drawbox(stepd*4, 0, 4, "V", "i");
drawbox(stepd*4, 4, 2, "A", "i");
drawbox(stepd*4, 10, 6, "P", "i");
draw((stepd+2.1,5){right}..{right}(stepd*2-0.1,9), EndArrow(SimpleHead));
draw((stepd*3+2.1,9){right}..{right}(stepd*4-0.1,5), EndArrow(SimpleHead));
drawarrow(3.4, stepd, 3, "$ \mathbf{W}_i $");
drawarrow(2*stepd + 3.4, 3*stepd, 12, "$ \mathbf{U}_i $");
curly_bracket(2.2, 0, 6);
curly_bracket(2*stepd + 2.2, 8, 16);
label("$V$:", (-1, 3), W);
label("$P$:", (-1, 12), W);
\end{asy}
\caption{Schematic of round $i$ in a $ \WfUTAOJK_{\mathrm{short}} $
proof system.}
\label{fig:round_description_QIP}
\end{figure}
\begin{lem}
Fix any verifier strategy \HiQndmIP{V} and prover strategy
\HiQndmIP{P} for an input to the problem $L$, as described above.
Then we can construct another prover $\HiQndmIP{P}'$ which
makes the verifier accept with exactly the same probability
as \HiQndmIP{P},
and if \HiQndmIP{V} interacts with $\HiQndmIP{P}'$
then for all $i \in \BXycapFx{m}$, at most $2i$ qubits of
$\nGRKiFQI{P}_i$ will have state different than \CligxKFo{0}.
\label{lem:prover_space_compression}
\end{lem}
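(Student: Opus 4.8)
The plan is to exploit that the verifier never inspects the prover's private register $\nGRKiFQI{P}_i$: the prover may apply any unitary to $\nGRKiFQI{P}_i$ between rounds without altering the verifier's view or its acceptance probability, so the only real constraint on $\nGRKiFQI{P}_i$ is the amount of entanglement it shares with the verifier's side. First I would assume, \EtVQAUvQ{}, that $\HiQndmIP{V}$ and $\HiQndmIP{P}$ use pure strategies, and let $\CligxKFo{\psi_i}$ be the global pure state at the end of round $i$, just after $\HiQndmIP{P}$ has sent $\nGRKiFQI{A}_i$; at that instant the prover holds exactly $\nGRKiFQI{P}_i$ and the verifier holds everything else, so the Schmidt rank of $\CligxKFo{\psi_i}$ across this cut equals $\mathrm{rank}\HAPnvheP{\rho_{\nGRKiFQI{P}_i}}$. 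The key claim is that this rank is at most $2^{2i}$, i.e.\ $\nGRKiFQI{P}_i$ effectively occupies only $2i$ qubits.

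To establish the bound I would follow the Schmidt rank across the prover--verifier cut through the protocol. The initial state of $\nGRKiFQI{P}_0 \nGRKiFQI{A}_0 \nGRKiFQI{V}_0$ is the product state $\CligxKFo{00 \ldots 0}$, so the rank starts at $1$. Each verifier unitary $\FdVGMifg{W}_i$ acts only on the verifier's side and each prover unitary $\FdVGMifg{U}_i$ acts only on the prover's side, so neither changes the Schmidt rank across the cut; the rank can grow only when a register physically crosses the cut, and moving a single qubit across multiplies it by a factor of at most $2$. Through the end of round $i$ the qubits that have crossed are precisely the questions $\nGRKiFQI{Q}_1, \ldots, \nGRKiFQI{Q}_i$ and the answers $\nGRKiFQI{A}_1, \ldots, \nGRKiFQI{A}_i$, a total of $2i$ one-qubit transmissions, so the Schmidt rank is at most $2^{2i}$ and hence $\mathrm{rank}\HAPnvheP{\rho_{\nGRKiFQI{P}_i}} \leq 2^{2i}$.

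With the rank bound in hand I would construct $\HiQndmIP{P}'$ by compressing $\nGRKiFQI{P}_i$ at the end of each of the first $m$ rounds. Since $\mathrm{supp}\HAPnvheP{\rho_{\nGRKiFQI{P}_i}}$ has dimension at most $2^{2i}$, there is a unitary $S_i$ on $\nGRKiFQI{P}_i$ mapping this support into the subspace spanned by the computational-basis states all of whose qubits after the first $2i$ equal $\CligxKFo{0}$; put $S_0 = \fAEoNWeb$. I would have $\HiQndmIP{P}'$ apply $\FdVGMifg{U}_i' \XYYXlqUI S_i \FdVGMifg{U}_i \znbdCPOj{S_{i-1}}$ in round $i$ for $i \in \BXycapFx{m}$, and $\FdVGMifg{U}_{m+1}' \XYYXlqUI \FdVGMifg{U}_{m+1} \znbdCPOj{S_m}$ in the final round carrying the \nkZOryEx{n}-qubit answer. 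Here $\znbdCPOj{S_{i-1}}$ first undoes the compression inherited from the previous round, $\FdVGMifg{U}_i$ is the original prover operation, and $S_i$ recompresses the output. A straightforward induction then shows that the global state produced by $\HiQndmIP{P}'$ at the end of round $i \leq m$ equals $\HAPnvheP{S_i \RoHOARwu \fAEoNWeb} \CligxKFo{\psi_i}$, differing from the $\HiQndmIP{P}$-run only by the prover-local unitary $S_i$, while after the final round the two runs produce the identical global state $\CligxKFo{\psi_{m+1}}$.

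Two conclusions follow. First, because every $S_i$ acts only on the prover's private register and the final uncompression makes the last global state coincide with that of $\HiQndmIP{P}$, the verifier's measurement statistics --- and thus the acceptance probability --- are exactly preserved. Second, the reduced state of $\nGRKiFQI{P}_i$ under $\HiQndmIP{P}'$ is $S_i \rho_{\nGRKiFQI{P}_i} \znbdCPOj{S_i}$, which by the choice of $S_i$ has every qubit past the first $2i$ in state $\CligxKFo{0}$, as required. I expect the only delicate point to be the rank estimate --- specifically, arguing cleanly that the local unitaries $\FdVGMifg{U}_i$ and $\FdVGMifg{W}_i$ leave the Schmidt rank across the cut unchanged while each of the $2i$ one-qubit transmissions at most doubles it; the remainder is bookkeeping. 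It is worth noting that $\HiQndmIP{P}'$ may depend on $\HiQndmIP{V}$, $\HiQndmIP{P}$, and the particular input, which is all the lemma requires.
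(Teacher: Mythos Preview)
Your proposal is correct and follows essentially the same approach as the paper: bound the Schmidt rank of the prover's private register across the prover--verifier cut, then sandwich each prover unitary between compress/decompress unitaries acting only on the prover's side. The paper derives the rank bound via an inductive purity argument that introduces (and immediately cancels) a virtual verifier-side compression $\FdVGMifg{X}_{i+1}$, whereas you reach the same $2^{2i}$ bound more directly by counting the $2i$ single-qubit transmissions across the cut; the compress/uncompress mechanism itself (your $S_i$, the paper's $\FdVGMifg{Y}_{i+1}$) is identical.
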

\begin{proof}
We prove the lemma by induction on $i$.
\UgLlztAi{} we modify \HiQndmIP{P} round-by-round
to satisfy the statement of the lemma.
If $i=0$ then $\nGRKiFQI{P}_0$ already has all
qubits in state \CligxKFo{0}, so we are done.
\par
Now suppose that for some $i$, we have that at most
$2i$ qubits of $\nGRKiFQI{P}_i$ has state different than \CligxKFo{0}.
Then we will show that at most $2i+2$ qubits of $\nGRKiFQI{P}_{i+1}$
has state different than \CligxKFo{0}, by modifying \HiQndmIP{P},
\keQZCenD{} the acceptance probability will stay the same.
So let us consider the situation in round $\HAPnvheP{i+1}$,
right after the verifier sent $\nGRKiFQI{Q}_{i+1}$
to the prover.
See Figure~\ref{fig:space_compression_QIP}.
Let's denote the part of $\nGRKiFQI{P}_i$ that contains
something other then \CligxKFo{0} by $\nGRKiFQI{P}_i'$.
By the induction hypothesis, the size of $\nGRKiFQI{P}_i'$
is $\leq 2i$.
Since the joint state of
$\nGRKiFQI{P}_i' \nGRKiFQI{Q}_{i+1} \nGRKiFQI{V}_{i+1}$
is pure, there exist a unitary $\FdVGMifg{X}_{i+1}$
acting on $\nGRKiFQI{V}_{i+1}$ and transforming it
to $\nGRKiFQI{W}_{i+1}$ of size $\leq 2i+1$ and a register
containing only \CligxKFo{00 \ldots 0}.
Suppose \HiQndmIP{V} performs this operation and let us not
worry now about whether it is doable in polynomial time.
Now $\HiQndmIP{P}'$ performs $\FdVGMifg{U}_{i+1}$ and gets
$\nGRKiFQI{P}_{i+1}$ and $\nGRKiFQI{A}_{i+1}$.
Now the joint state of
$\nGRKiFQI{P}_{i+1} \nGRKiFQI{A}_{i+1} \nGRKiFQI{W}_{i+1}$
is pure, so by the same argument as before,
we have that there exist a unitary $\FdVGMifg{Y}_{i+1}$
that transforms $\nGRKiFQI{P}_{i+1}$ into
$\nGRKiFQI{P}_{i+1}'$ of size $\leq 2i+2$ and some
register containing only \CligxKFo{00 \ldots 0}.
Our $\HiQndmIP{P}'$ performs this $\FdVGMifg{Y}_{i+1}$ as well.
Now \HiQndmIP{V} performs \VGXhOmeS{\FdVGMifg{X}_{i+1}}
and gets back $\nGRKiFQI{V}_{i+1}$.
Additionally $\HiQndmIP{P}'$ in the next round will
perform \VGXhOmeS{\FdVGMifg{Y}_{i+1}} just before
it is about to perform $\FdVGMifg{U}_{i+2}$,
so at that point the state of the whole system
will be the same as in the original proof system, so
the acceptance probability won't change.
Note that \HiQndmIP{V} doesn't actually need to
perform $\FdVGMifg{X}_{i+1}$ since it is followed
by \VGXhOmeS{\FdVGMifg{X}_{i+1}}.
So we don't need to modify \HiQndmIP{V} at all.
This finishes the proof of the lemma.
\end{proof}
\begin{figure}[h]
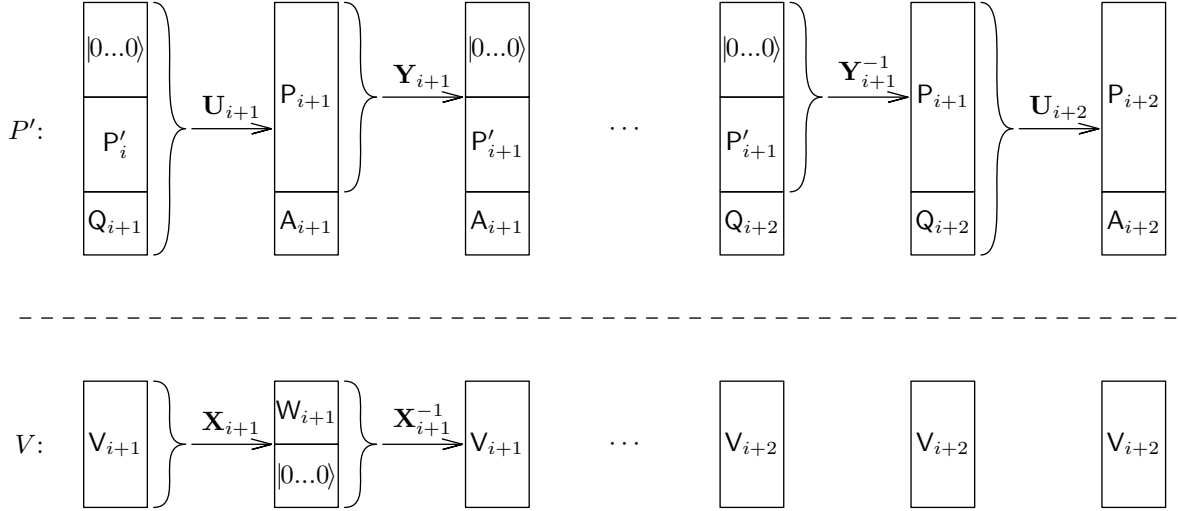

\centering
\begin{asy}
size(15.7cm);
defaultpen(fontsize(10));
int stepd = 6;
void drawbox(real x, int y, int height, string name){
draw((x, y)--(x+2, y)--(x+2, y+height)--(x, y+height)--cycle);
label(name, (x+1, y+(height/2)));
}
void drawbox_reg(real x, int y, int height, string name, string index){
drawbox(x, y, height, "$\mathsf{" + name + "}_{" + index + "}$");
}
void drawbox_regp(real x, int y, int height, string name, string index){
drawbox(x, y, height, "$\mathsf{" + name + "}'_{" + index + "}$");
}
void drawbox_zero(real x, int y, int height){
drawbox(x, y, height,
"$ \left| \: \! \! 0 ... 0 \: \! \! \right\rangle $");
}
void bracketarrow(real x, int ybottom, int ytop, string op, string subscript, bool inverse = false){
real ym = (ybottom + ytop) / 2;
curly_bracket(x + 0.2, ybottom, ytop);
real x_begin = x + 1.4;
real x_end = x + stepd - 2.1;
draw((x_begin, ym)--(x_end, ym), EndArrow(SimpleHead));
string inv = "";
if(inverse) inv = "^{-1}";
label(
"$ \mathbf{" + op + "}" + inv + "_{" + subscript + "} $",
((x_begin + x_end) / 2, ym), N);
}
real bigstep = (stepd - 2) * 1.5 + 2;
draw((-2, 6)--(4 * stepd + bigstep + 3, 6), dashed);
drawbox_reg(0, 0, 4, "V", "i+1");
drawbox_reg(0, 8, 2, "Q", "i+1");
drawbox_regp(0, 10, 3, "P", "i");
drawbox_zero(0, 13, 3);
bracketarrow(2, 0, 4, "X", "i+1");
bracketarrow(2, 8, 16, "U", "i+1");
drawbox_zero(stepd, 0, 2);
drawbox_reg(stepd, 2, 2, "W", "i+1");
drawbox_reg(stepd, 8, 2, "A", "i+1");
drawbox_reg(stepd, 10, 6, "P", "i+1");
bracketarrow(stepd+2, 0, 4, "X", "i+1", true);
bracketarrow(stepd+2, 10, 16, "Y", "i+1");
drawbox_reg(2*stepd, 0, 4, "V", "i+1");
drawbox_reg(2*stepd, 8, 2, "A", "i+1");
drawbox_regp(2*stepd, 10, 3, "P", "i+1");
drawbox_zero(2*stepd, 13, 3);
label("$ \cdots $", (2 * stepd + bigstep / 2 + 1, 2));
label("$ \cdots $", (2 * stepd + bigstep / 2 + 1, 12));
drawbox_reg(2 * stepd + bigstep, 0, 4, "V", "i+2");
drawbox_reg(2 * stepd + bigstep, 8, 2, "Q", "i+2");
drawbox_regp(2 * stepd + bigstep, 10, 3, "P", "i+1");
drawbox_zero(2 * stepd + bigstep, 13, 3);
bracketarrow(2 * stepd + bigstep + 2, 10, 16, "Y", "i+1", true);
drawbox_reg(3 * stepd + bigstep, 0, 4, "V", "i+2");
drawbox_reg(3 * stepd + bigstep, 8, 2, "Q", "i+2");
drawbox_reg(3 * stepd + bigstep, 10, 6, "P", "i+1");
bracketarrow(3 * stepd + bigstep + 2, 8, 16, "U", "i+2");
drawbox_reg(4 * stepd + bigstep, 0, 4, "V", "i+2");
drawbox_reg(4 * stepd + bigstep, 8, 2, "A", "i+2");
drawbox_reg(4 * stepd + bigstep, 10, 6, "P", "i+2");
label("$V$:", (-1, 2), W);
label("$P'$:", (-1, 12), W);
\end{asy}
\caption{Modification of the prover in round
$ i+1 $, in the proof of Lemma~\ref{lem:prover_space_compression}.}
\label{fig:space_compression_QIP}
\end{figure}
\begin{cor}
\UdxPemrT{}, for all $i \in \BXycapFx{m}$ we can assume that
$\nGRKiFQI{P}_i$ is made up of at most $2m$ qubits,
both in the honest and dishonest case.
Furthermore, the action of the prover in round $i$
is still a unitary, transforming
$ \nGRKiFQI{P}_{i-1} \nGRKiFQI{Q}_i $ to
$ \nGRKiFQI{P}_i \nGRKiFQI{A}_i $.
\label{cor:upper_bound_on_provers_space}
\end{cor}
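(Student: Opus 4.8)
The plan is to invoke Lemma~\ref{lem:prover_space_compression} and then simply delete the qubits that are guaranteed never to leave the state \CligxKFo{0}. First I would replace \HiQndmIP{P} by the prover $\HiQndmIP{P}'$ produced by the lemma; this leaves the acceptance probability unchanged, and for every $i \in \BXycapFx{m}$ at most $2i$ qubits of $\nGRKiFQI{P}_i$ differ from \CligxKFo{0}. Since $i \leq m$, at most $2m$ qubits of each $\nGRKiFQI{P}_i$ carry nontrivial information. The construction in the lemma gives more than a counting bound: the operation $\FdVGMifg{Y}_i$ was chosen precisely so that $\nGRKiFQI{P}_i$ factors as a register of size $\leq 2i$ tensored with a block in the fixed state \CligxKFo{00\ldots0}, so these extra qubits are genuinely unentangled from the rest of the system at the end of each round. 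Hence I can remove them and redefine $\nGRKiFQI{P}_i$ to be the at-most-$2m$-qubit register holding the nontrivial part, padding with \CligxKFo{0}'s so that every $\nGRKiFQI{P}_i$ has exactly $2m$ qubits.

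For the unitarity claim, note that in \HiQndmIP{P}' the operation carried out in round $i$ is the composition $\FdVGMifg{Y}_i \FdVGMifg{U}_i \VGXhOmeS{\FdVGMifg{Y}_{i-1}}$ (with the convention $\FdVGMifg{Y}_0 = \fAEoNWeb$): the prover first undoes the previous round's compression, then applies its original unitary $\FdVGMifg{U}_i$, then recompresses. Each of these three factors acts only on the prover's side, and being a product of unitaries the composition is itself unitary; by construction it sends $\nGRKiFQI{P}_{i-1}\nGRKiFQI{Q}_i$ to $\nGRKiFQI{P}_i\nGRKiFQI{A}_i$. This already shows that the per-round action remains a single unitary rather than a general channel.

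The hard part will be arguing that, after deleting the always-\CligxKFo{0} qubits, the restriction of $\FdVGMifg{Y}_i \FdVGMifg{U}_i \VGXhOmeS{\FdVGMifg{Y}_{i-1}}$ to the surviving $2m+1$ qubits is still a unitary and not merely an isometry defined on a subspace. On the states that actually occur in the protocol the composite maps zero-padded inputs to zero-padded outputs, so it induces an isometry from the span of the reachable inputs to the span of the reachable outputs; but on an arbitrary zero-padded input it need not preserve the padding, so the naive restriction is not obviously surjective. The clean way around this is to observe that both the truncated input space $\nGRKiFQI{P}_{i-1}\nGRKiFQI{Q}_i$ and the truncated output space $\nGRKiFQI{P}_i\nGRKiFQI{A}_i$ have the same dimension $2^{2m+1}$, so this isometry between subspaces extends to a full unitary on all $2m+1$ qubits, and any such extension may be taken as the prover's action in the truncated system. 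Finally I would check that nothing in the reduction distinguishes the honest from the dishonest case; but since Lemma~\ref{lem:prover_space_compression} bounds the nontrivial qubits for an \emph{arbitrary} prover strategy, the same truncation applies in both, yielding $\nGRKiFQI{P}_i$ of at most $2m$ qubits together with a unitary round action in each case.
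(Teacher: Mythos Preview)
Your argument is correct and is precisely the justification one would supply here; the paper itself gives no proof at all, stating the result as an immediate corollary of Lemma~\ref{lem:prover_space_compression}. Your care with the unitary-extension issue (passing from the partial isometry on reachable states to a full unitary on the $2m{+}1$-qubit space via equal dimensions) goes beyond what the paper spells out but is exactly the right way to make the ``furthermore'' clause rigorous.
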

\section{Proof of the main theorem}
\label{sec:main_theorem_proof}
This section finishes the proof of the main
theorem by using the result from the previous
section.
The idea is that the prover's unitaries in the
first $m$ rounds can be given as classical
descriptions of quantum circuits.
Using this, the \VFQyzTUk{} verifier can approximately
produce the state of the whole system appearing
before the last round in the \WfUTAOJK{} protocol.
This means the prover's private space, the answer
to the verifier and the verifier's private space.
To simulate the last round, we don't need to care
about the prover's private space, so we treat
its operation as a quantum channel, acting
on the the private space of the prover and the
question from the verifier.
Since the input is on \EtgXbDyQ-many qubits,
to perform the action of this channel, we can use
the same method as in \cite[Section~3]{Beigi2011}.
The detailed proof is as follows.
\begin{proof}[Proof of Theorem~\ref{thm:QIPshort_equals_QMA}]
The inclusion $ \VFQyzTUk \subseteq \QMGFPyuL{\OraXXrdX{n}}{c}{s} $
is trivial, so we only need to prove
$ \QMGFPyuL{\OraXXrdX{n}}{c}{s} \subseteq \VFQyzTUk $.
Just as above, let $ L \in \QMGFPyuL{m+1}{c}{s} $,
where $ m = \WctPVadE{n} $, and let \HiQndmIP{V} be the
corresponding verifier.
We will construct a verifier \HiQndmIP{W} for
the \VFQyzTUk{} proof system.
Because of Corollary~\ref{cor:upper_bound_on_provers_space},
we can assume that any prover strategy in the first
$m$ rounds are unitary operators on $2m$ qubits, say
$ \FdVGMifg{U}_1, \ldots, \FdVGMifg{U}_m $.
The constructed \HiQndmIP{W} expects to get as part
of the proof, the classical descriptions of circuits
$ \HiQndmIP{C}_{\FdVGMifg{U}_1, 1 / 3^n}, \ldots,
\HiQndmIP{C}_{\FdVGMifg{U}_m, 1 / 3^n} $,
\DwcWexfp{} the circuits that approximate the prover's
operators with precision $1/3^n$.
According to Corollary~\ref{cor:unitary_approx_by_circuit}
the length of this proof is
$ \UpcxRgfA{m \cdot 5^{2m} \cdot \GPnJfPwC{\log^3}{5^{2m} \cdot 3^n} }
\in \nkZOryEx{n} $.
\HiQndmIP{W} uses this classical proof to simulate the
first $m$ rounds of the proof system,
and produce the state of registers
$\nGRKiFQI{P}_m \nGRKiFQI{A}_m \nGRKiFQI{V}_m$, which
we denote by \CligxKFo{\psi}.
Note that since each circuit approximates
the corresponding unitary with precision $1/3^n$,
after applying \EtgXbDyQ{}-many of them,
it is true that
\[ \fsXxynbX{ \sCUPypWN{\psi} }{ \sCUPypWN{\phi} } \leq
\frac{m}{3^n} \leq \frac{1}{2^n} , \]
for sufficiently large $n$;
where \CligxKFo{\phi} is the state of
$\nGRKiFQI{P}_m \nGRKiFQI{A}_m \nGRKiFQI{V}_m$
in the case where the unitaries
$ \FdVGMifg{U}_1, \ldots, \FdVGMifg{U}_m $
were applied instead of the circuits.
\par
We are left with specifying how \HiQndmIP{W}
simulates the prover in the last $\HAPnvheP{m+1}$th
round.
We use exactly the same method as in the proof of
Theorem~\ref{thm:QIPlogpoly_equals_QMA},
appeared in \cite{Beigi2011}, and our proof
closely follows that proof as well.
Since we are in the last round,
we don't have to keep track of the prover's
private space, so we can just describe it's strategy
as a quantum channel that transforms registers
$\nGRKiFQI{P}_m \nGRKiFQI{Q}_{m+1}$ to $\nGRKiFQI{A}_{m+1}$.
Let's call this channel
$\Phi \in \zRYfDEHl{\zkKUsJUE{S}}{\zkKUsJUE{R}}$ from now on;
where \zkKUsJUE{S} is the joint space associated to
registers $\nGRKiFQI{P}_m \nGRKiFQI{Q}_{m+1}$,
and \zkKUsJUE{R} is the space associated to $\nGRKiFQI{A}_{m+1}$.
The input space \zkKUsJUE{S} is on
$ q \XYYXlqUI 2m+1 = \UpcxRgfA{\log n} $
qubits and the output space \zkKUsJUE{R} is on \nkZOryEx{n} qubits.
\HiQndmIP{W} expects to get
$ \rho_{\Phi}^{\RoHOARwu \HAPnvheP{N+k}} $
as the quantum part of it's proof,
where $ \rho_{\Phi} \in \ppKvmYcj{\zkKUsJUE{R} \RoHOARwu \zkKUsJUE{S}} $
is the normalized \LsetZeCP{} of $\Phi$,
for $N$ and $k$ to be specified later.
Let's divide up the quantum certificate given to
\HiQndmIP{W} into registers
$\nGRKiFQI{R}_1$, $\nGRKiFQI{S}_1$,
$\nGRKiFQI{R}_2$, $\nGRKiFQI{S}_2$, \ldots,
$\nGRKiFQI{R}_{N+k}$, $\nGRKiFQI{S}_{N+k}$,
where the space of each $\nGRKiFQI{R}_i$ is \zkKUsJUE{R},
and the space of each $\nGRKiFQI{S}_i$ is \zkKUsJUE{S}.
\HiQndmIP{W} expects each $\nGRKiFQI{R}_i \nGRKiFQI{S}_i$
to contain a copy of $\rho_{\Phi}$.
To simulate the last round of the interactive proof system,
\HiQndmIP{W} does the following.
\begin{enumerate}
\item \label{step:permute_and_discard}
Randomly permute the pairs
$ \HAPnvheP{\nGRKiFQI{R}_1, \nGRKiFQI{S}_1}, \ldots,
\HAPnvheP{\nGRKiFQI{R}_{N+k}, \nGRKiFQI{S}_{N+k}} $, according
to a uniformly chosen permutation,
and discard all but the first
\HAPnvheP{N+1} pairs.
\item \label{step:tomography}
Perform quantum state tomography on the
registers \HAPnvheP{\nGRKiFQI{S}_2, \ldots, \nGRKiFQI{S}_{N+1}},
and reject if the resulting approximation is not within
trace-distance $\delta / 2$ of the completely mixed state
$\HAPnvheP{1/2^q} \fAEoNWeb$, for $\delta$ to be specified below.
\item \label{step:simulation}
Simulate the channel specified by
\HAPnvheP{\nGRKiFQI{R}_1, \nGRKiFQI{S}_1} by
post-selection.
Reject if post-selection fails,
otherwise simulate the last operation of \HiQndmIP{V}
and accept \bsIbxBHW{} \HiQndmIP{V} accepts.
\end{enumerate}
Let $ \GPnJfPwC{g}{n} \in \nkZOryEx{n} $ be \keQZCenD{}
$\GPnJfPwC{c}{n} - \GPnJfPwC{s}{n} \geq 1 / \GPnJfPwC{g}{n}$.
We now set the parameters.
\[ \varepsilon \XYYXlqUI \frac{1}{4^{q+1}g}, \quad
\delta \XYYXlqUI \frac{\varepsilon^2}{4}, \quad
N \XYYXlqUI \orxHiCqI{\frac{2^{10q}}{\HAPnvheP{\delta/2}^3}}, \quad
k \XYYXlqUI \orxHiCqI{ \frac{4^{q+1} \HAPnvheP{N+1}}{\varepsilon} }
\]
Note that $ 1/\varepsilon, 1/\delta, N, k \in \nkZOryEx{n} $.
\paragraph{Completeness.}
Suppose there exist a \HiQndmIP{P} that causes
\HiQndmIP{V} to accept with probability $ \geq c $.
Let the certificate to \HiQndmIP{W} be the classical
descriptions of circuits
$ \HiQndmIP{C}_{\FdVGMifg{U}_1, 1 / 3^n}, \ldots,
\HiQndmIP{C}_{\FdVGMifg{U}_m, 1 / 3^n} $,
together with the state
$ \rho_{\Phi}^{\RoHOARwu \HAPnvheP{N+k}} $,
where each $\nGRKiFQI{R}_i \nGRKiFQI{S}_i$
contains a copy of $\rho_{\Phi}$,
for $ i \in \BXycapFx{N+k} $.
After simulating the first $m$ rounds,
\HiQndmIP{W} produces \CligxKFo{\psi} which is
$ \leq 1/2^n $ far from the correct \CligxKFo{\phi}
in the trace distance, just as described above.
Note that in the simulation of the last round,
step~\ref{step:permute_and_discard} doesn't
change the state of registers
$ \HAPnvheP{\nGRKiFQI{R}_1, \nGRKiFQI{S}_1}, \ldots,
\HAPnvheP{\nGRKiFQI{R}_{N+1}, \nGRKiFQI{S}_{N+1}} $.
According to Lemma~\ref{lem:state_tomography},
\HiQndmIP{W} rejects in step~\ref{step:tomography}
with probability $ \leq \delta/2 $.
In step~\ref{step:simulation}, post-selection
succeeds with probability $ 1/4^q $.
If \HiQndmIP{W} was using \CligxKFo{\phi} instead
of \CligxKFo{\psi} the probability of acceptance
would be at least
\[ \HAPnvheP{1 - \frac{\delta}{2}} \frac{c}{4^q} . \]
So using \CligxKFo{\psi}, the probability that \HiQndmIP{W}
accepts is at least
\[ \HAPnvheP{1 - \frac{\delta}{2}} \frac{c}{4^q} - \frac{1}{2^n}
\geq \frac{c}{4^q} - \varepsilon - \frac{1}{2^n} . \]
\paragraph{Soundness.}
Suppose that all \HiQndmIP{P} causes \HiQndmIP{V}
to accept with probability $ \leq s $.
Note that, \EtVQAUvQ{} any classical proof specifies some
set of unitaries that correspond to a valid prover strategy.
Hence it is still true, that after \HiQndmIP{W} simulates
the first $m$ rounds using the given circuits,
it ends up with a state \CligxKFo{\psi} that is
at most $ 1/2^n $ far from a state \CligxKFo{\phi},
where \CligxKFo{\phi} can be produced by some
\HiQndmIP{P} interacting with \HiQndmIP{V}.
\par
Now consider the situation that the state of
\HAPnvheP{\nGRKiFQI{S}_1, \ldots, \nGRKiFQI{S}_{N+1}}
before step~\ref{step:simulation} has the form
\begin{align}
\sigma^{\RoHOARwu \HAPnvheP{N+1}} ,
\label{eq:separable_state}
\end{align}
for some $ \sigma \in \ppKvmYcj{\zkKUsJUE{S}} $.
(The classical part of the proof has
been used up and discarded before
step~\ref{step:permute_and_discard}.)
We consider two cases:
\begin{itemize}
\item Suppose that
$ \RtRjhAaz{\sigma - \HAPnvheP{1/2^q} \fAEoNWeb} < \delta $.
Let the state of \HAPnvheP{\nGRKiFQI{R}_1, \nGRKiFQI{S}_1}
before step~\ref{step:simulation}
be $ \xi \in \ppKvmYcj{\zkKUsJUE{R} \RoHOARwu \zkKUsJUE{S}} $,
so we have $ \jYPzSEcs{\zkKUsJUE{R}}{\xi} = \sigma $.
Because of the same argument as in
\cite{Beigi2011}, there exists a state
$ \tau \in \ppKvmYcj{\zkKUsJUE{R} \RoHOARwu \zkKUsJUE{S}} $
\keQZCenD{} $ \jYPzSEcs{\zkKUsJUE{R}}{\tau} = \HAPnvheP{1/2^q} \fAEoNWeb $
and $ \fsXxynbX{\tau}{\xi} \leq \varepsilon $.
Given this $\tau$, the post-selection in
step~\ref{step:simulation} succeeds
with probability $ 1/4^q $, so the acceptance
in step~\ref{step:simulation} occurs with probability
at most $ s/4^q + 1/2^n $.
Given $\xi$ instead of $\tau$, \HiQndmIP{W}
will accept with probability at most
\[ \frac{s}{4^q} + \frac{1}{2^n} + \varepsilon . \]
\item If $ \RtRjhAaz{\sigma - \HAPnvheP{1/2^q} \fAEoNWeb} \geq \delta $,
then in step~\ref{step:tomography}, \HiQndmIP{W}
will accept with probability $ \leq \delta / 2 $.
\end{itemize}
Since $ \delta / 2 \leq s/4^q + 1/2^n + \varepsilon $ then in
both cases acceptance occurs with probability
$ \leq s/4^q + 1/2^n + \varepsilon $.
\par
Now suppose that the state of
\HAPnvheP{\nGRKiFQI{S}_1, \ldots, \nGRKiFQI{S}_{N+1}}
before step~\ref{step:simulation}
has the form
\begin{align}
\sum_i p_i \sigma_i^{\RoHOARwu \HAPnvheP{N+1}} ,
\label{eq:convex_combination}
\end{align}
for some probability vector $p$ and some set
$ \ZcdnmWHx{\sigma_i} \subset \ppKvmYcj{\zkKUsJUE{S}} $.
Since \eqref{eq:convex_combination} is a convex
combination of states of the form \eqref{eq:separable_state},
acceptance will occur with probability
$ \leq s/4^q + 1/2^n + \varepsilon $.
In the real scenario, by Theorem~\ref{thm:deFinetti},
it is true that the state of
\HAPnvheP{\nGRKiFQI{S}_1, \ldots, \nGRKiFQI{S}_{N+1}}
after step~\ref{step:permute_and_discard} will be
$\varepsilon$ close to a state of the form
\eqref{eq:convex_combination}, in the trace distance.
So the probability of acceptance of \HiQndmIP{W}
will be $ \leq s/4^q + 2 \varepsilon + 1/2^n $.
Since \[ \frac{c}{4^q} - \varepsilon - \frac{1}{2^n}
- \HAPnvheP{ \frac{s}{4^q} + 2 \varepsilon + \frac{1}{2^n} }
\geq \frac{1}{\GPnJfPwC{h}{n}} , \]
for some $ \GPnJfPwC{h}{n} \in \nkZOryEx{n} $,
it holds that $ L \in \VFQyzTUk $.
\end{proof}
\MGEHGnhN

\begin{thebibliography}{CKMR07}

\bibitem[AN02]{Aharonov2002}
Dorit Aharonov and Tomer Naveh.
\newblock {Q}uantum {NP} - a survey.
\newblock 2002, quant-ph/0210077.

\bibitem[BSW11]{Beigi2011}
Salman Beigi, Peter Shor, and John Watrous.
\newblock Quantum interactive proofs with short messages.
\newblock {\em Theory of Computing}, 7(1):101--117, 2011.

\bibitem[BT09]{Blier2009}
Hugue Blier and Alain Tapp.
\newblock All languages in {NP} have very short quantum proofs.
\newblock In {\em Third International Conference on Quantum, Nano and Micro
  Technologies}, pages 34--37, 2009.

\bibitem[CKMR07]{Christandl2007}
Matthias Christandl, Robert K\"{o}nig, Graeme Mitchison, and Renato Renner.
\newblock One-and-a-half quantum de {F}inetti theorems.
\newblock {\em Communications in Mathematical Physics}, 273:473--498, 2007.

\bibitem[JJUW10]{Jain2010}
Rahul Jain, Zhengfeng Ji, Sarvagya Upadhyay, and John Watrous.
\newblock {QIP} = {PSPACE}.
\newblock In {\em Proceedings of the 42nd annual ACM Symposium on Theory of
  Computing}, STOC '10, pages 573--582, 2010.

\bibitem[Kit97]{Kitaev1997}
A.~Yu Kitaev.
\newblock Quantum computations: algorithms and error correction.
\newblock {\em Russian Mathematical Surveys}, 52(6):1191, 1997.

\bibitem[KM03]{Kobayashi2003a}
Hirotada Kobayashi and Keiji Matsumoto.
\newblock Quantum multi-prover interactive proof systems with limited prior
  entanglement.
\newblock {\em Journal of Computer and System Sciences}, 66(3):429--450, 2003.

\bibitem[KW00]{Kitaev2000}
Alexei Kitaev and John Watrous.
\newblock Parallelization, amplification, and exponential time simulation of
  quantum interactive proof systems.
\newblock In {\em Proceedings of the 32nd annual ACM Symposium on Theory of
  Computing}, STOC '00, pages 608--617, 2000.

\bibitem[MW05]{Marriott2005}
Chris Marriott and John Watrous.
\newblock Quantum {A}rthur--{M}erlin games.
\newblock {\em Computational Complexity}, 14:122--152, 2005.

\bibitem[NC00]{Nielsen2000}
Michael~A. Nielsen and Isaac~L. Chuang.
\newblock {\em {Quantum Computation and Quantum Information}}.
\newblock Cambridge University Press, 2000.

\bibitem[Wat99]{Watrous1999}
John Watrous.
\newblock {PSPACE} has constant-round quantum interactive proof systems.
\newblock In {\em 40th Annual Symposium on Foundations of Computer Science},
  pages 112--119, 1999.

\bibitem[Wat03]{Watrous2003}
John Watrous.
\newblock {PSPACE} has constant-round quantum interactive proof systems.
\newblock {\em Theoretical Computer Science}, 292(3):575--588, 2003.

\bibitem[Wat08a]{Watrous2008a}
John Watrous.
\newblock Quantum computational complexity.
\newblock April 2008, arXiv:0804.3401.

\bibitem[Wat08b]{Watrous2008}
John Watrous.
\newblock Theory of quantum information.
\newblock Lecture notes from Fall 2008,
  \texttt{http://www.cs.uwaterloo.ca/\~{}watrous/quant-info/}, 2008.

\end{thebibliography}
\end{document}